\renewcommand{\ALG@name}{Algorithm}
\newtheorem{theorem}{Theorem}[section]
\newtheorem{lemma}[theorem]{Lemma}
\newtheorem{corollary}[theorem]{Corollary}
\newtheorem{fact}[theorem]{Fact}
\newtheorem{definition}[theorem]{Definition}
\newtheorem{problem}[theorem]{Problem}
\crefname{question}{Question}{Questions}
\newtheorem{notation}[theorem]{Notation}
\DeclarePairedDelimiter\ket{\lvert}{\rangle}
\DeclarePairedDelimiter\bra{\langle}{\rvert}
\newcommand{\E}{\mathop{\bf E\/}}
\newcommand{\tr} {\operatorname{tr}}
\newcommand{\supp} {\operatorname{supp}}
\newcommand{\spanspace} {\operatorname{span}}
\newcommand{\ketbra}[2]{\ensuremath{\ket{#1}\!\bra{#2}}}
\newcommand{\kett}[1]{|#1\rangle\!\rangle}
\newcommand{\bbra}[1]{\langle\!\langle#1|}
\newcommand{\kettbbra}[2]{\ensuremath{\kett{#1}\!\bbra{#2}}}
\newcommand{\bbrakett}[2]{\ensuremath{\langle\!\langle{#1}\vert{#2}\rangle\!\rangle}}
\DeclarePairedDelimiter\parens{\lparen}{\rparen}
\newcommand{\calE}{\mathcal{E}}
\newcommand{\qchannel}{\textbf{\textup{QChan}}}
\newcommand{\isochannel}{\textbf{\textup{ISO}}}
\title{Optimal lower bound for quantum channel tomography\\
in away-from-boundary regime}
\author{
Kean Chen \thanks{University of Pennsylvania, Philadelphia, USA. Email: \texttt{keanchen.gan@gmail.com}}\and
Zhicheng Zhang \thanks{University of Technology Sydney, Sydney, Australia. Email: \texttt{iszczhang@gmail.com}}\and
Nengkun Yu \thanks{Stony Brook University, NY, USA. Email: \texttt{nengkunyu@gmail.com}}
}
\date{}
\begin{document}

\maketitle

\begin{abstract}
Consider quantum channels with input dimension $d_1$, output dimension $d_2$ and Kraus rank at most $r$.
Any such channel must satisfy the constraint $rd_2\geq d_1$, and the parameter regime $rd_2=d_1$ is called the boundary regime.
In this paper, we show an optimal query lower bound $\Omega(rd_1d_2/\varepsilon^2)$ for quantum channel tomography to within diamond norm error $\varepsilon$ in the away-from-boundary regime $rd_2\geq 2d_1$, matching the existing upper bound $O(rd_1d_2/\varepsilon^2)$.
In particular, this lower bound fully settles the query complexity 
for the commonly studied case of equal input and output dimensions $d_1=d_2=d$ with $r\geq 2$, in sharp contrast to the unitary case $r=1$ where Heisenberg scaling $\Theta(d^2/\varepsilon)$ is achievable.

\end{abstract}

\section{Introduction}

Estimating an unknown quantum physical process from experimental data is a foundational task in quantum computing and quantum information. A central question is to quantify the informational resources required for such estimation when the unknown process is given as a black-box quantum channel. In this paper, we study \textit{quantum channel tomography}: given query access to an unknown quantum channel $\calE$, 
the goal is to learn a full classical description of $\calE$ using as few queries as possible (to a prescribed accuracy, e.g., in diamond norm).

Research on quantum channel tomography traces back to the more basic problem of quantum state tomography, which aims to learn a full classical description of an unknown quantum state from samples.
Quantum state tomography can be viewed as a special case of quantum channel tomography in which the input dimension is $1$.
The optimal tomography of pure states has been well understood since the seminal works~\cite{hayashi1998asymptotic,bruss1999optimal,keyl1999optimal}.
Optimal tomography of mixed states was developed later in~\cite{Haah_2017,10.1145/2897518.2897544} and subsequently refined, extended, and clarified in~\cite{o2017efficient,GKKT20,yuen2023improved,scharnhorst2025optimal,pelecanos2025debiased,pelecanos2025mixedstatetomographyreduces}. 

Compared with quantum state tomography, general quantum channel tomography involves a richer set of considerations. One may design the input states arbitrarily (including entanglement with ancillas), apply the unknown channel sequentially and adaptively, and perform collective measurements across multiple uses, which leads to more subtle analyses.
Despite this difficulty, extensive work~\cite{chuang1997prescription,poyatos1997complete,leung2000towards,d2001quantum,mohseni2008quantum,kliesch2019guaranteed,bouchard2019quantum,surawy2022projected,Oufkir_2023,oufkir2023adaptivity,huang2023learning,pmlr-v195-fawzi23a,caro2024learning,rosenthal2024quantum,zhao2024learning,zambrano2025fast,yoshida2025quantum} has been devoted to quantum channel tomography over the last thirty years.
Notably, for tomography of unitary channels, Haah, Kothari, O'Donnell, and Tang~\cite{haah2023query} settled the optimal query complexity $\Theta(d^2/\varepsilon)$, where $d$ is the channel dimension and $\varepsilon$ is the target error in diamond norm.
For tomography of general channels using only non-adaptive incoherent measurements, Oufkir established a near-optimal query complexity  $\widetilde{\Theta}(d_1^3 d_2^3/\varepsilon^2)$~\cite{Oufkir_2023,oufkir2023adaptivity}, generalizing the algorithm in~\cite{surawy2022projected}, where $d_1$ and $d_2$ are the input and output dimensions and $\varepsilon$ is the diamond norm error. For isometry channel tomography, Yoshida, Miyazaki, and Murao~\cite{yoshida2025quantum} established a query lower bound of $\Omega((d_2-d_1)d_1 /(\varepsilon^{2}\log 1/\varepsilon))$.

In the most general setting, the unknown channel has input dimension $d_1$, output dimension $d_2$, and Kraus rank at most $r$. Recent work has substantially improved our understanding of the optimal scalings.
On the upper-bound side, Mele and Bittel~\cite{mele2025optimal} and Chen, Yu, and Zhang~\cite{chen2025quantum} showed that $O\parens*{rd_1d_2/\varepsilon^2}$ queries suffice for channel tomography with diamond norm error $\varepsilon$.
Moreover, in the boundary regime $rd_2=d_1$, \cite{chen2025quantum} showed that $O(rd_1d_2/\varepsilon)$ queries suffice for channel tomography with Choi-state trace norm error $\varepsilon$, achieving the Heisenberg scaling.

Girardi, Mele, Zhao, Fanizza, and
Lami~\cite{GMZFL25} and Yoshida, Niwa, and Murao~\cite{yoshida2025random} then algorithmically strengthened the local test technique in~\cite{chen2025quantum} by explicitly constructing random Stinespring dilation superchannels.
Conceptually, local test and random dilation for channels can be viewed as dual techniques in the Heisenberg and Schr{\"o}dinger pictures,  respectively. The underlying ideas trace back to local test and random purification for quantum states~\cite{tang2025conjugate,chen2024local,soleimanifar2022testing}.
More developments can be found in~\cite{pelecanos2025mixedstatetomographyreduces,girardi2025random,mele2025random,walter2025random}.

On the lower-bound side, it was shown in \cite{GMZFL25} that $\Omega(rd_1d_2)$ queries are required for channel tomography at constant error, improving the prior lower bound $\Omega(d_1^2 d_2^2/\log(d_1d_2))$ for full Kraus-rank (i.e., $r=d_1d_2$) tomography due to Rosenthal, Aaronson, Subramanian, Datta, and Gur~\cite{rosenthal2024quantum}. More recently, Oufkir and 
Girardi~\cite{oufkir2026improved} incorporated the $\varepsilon$-dependence and proved a lower bound of $\Omega\parens*{rd_1d_2/(\varepsilon^2\log (d_2r/\varepsilon))}$\footnote{
    During the preparation of this manuscript, we became aware of a very recent update (arXiv v3) of \cite{oufkir2026improved}, in which their lower bound in the away-from-boundary regime $rd_2\geq 2d_1$ is improved to $\Omega\parens*{rd_1d_2/(\varepsilon^2\log (d_2r/\varepsilon))}$.
    See \Cref{sub:related-work} for further discussion.
} 
in the away-from-boundary regime $ rd_2\geq 2d_1$, matching the upper bound in \cite{mele2025optimal,chen2025quantum} up to a logarithmic factor. They also showed a lower bound $\Omega\parens*{rd_1d_2/(\varepsilon\log (d_2r/\varepsilon))}$ in the boundary regime $rd_2=d_1$, where $\varepsilon$ can be either Choi-state trace norm or diamond norm error, matching the upper bound in \cite{chen2025quantum} up to a logarithmic factor.

A remaining open question is whether a matching query lower bound $\Omega(rd_1d_2/\varepsilon^2)$ for general quantum channel tomography can be proved without logarithmic factors. In this paper, we resolve this question by establishing such a lower bound in the away-from-boundary regime $ rd_2 \geq 2d_1$. In particular, this settles the most commonly studied case of equal input and output dimensions $d_1=d_2$ with $r\geq 2$.

\subsection{Main results}


Our main result is as follows.
\begin{theorem}[Optimal lower bound in away-from-boundary regime, \cref{thm-1110326} restated]
\label{thm:main}
Let $d_1,d_2,r$ be positive integers such that $ rd_2\geq 2d_1$. Tomography of quantum channels with input dimension $d_1$, output dimension $d_2$ and Kraus rank at most $r$, and to within diamond norm error $\varepsilon$, requires 
$\Omega(rd_1d_2/\varepsilon^2)$
queries.
\end{theorem}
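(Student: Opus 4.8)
The plan is to prove this lower bound via an information-theoretic (Holevo/Fano-type) argument against the most general adaptive protocol that uses coherent, sequential, adaptive accesses with ancillas. First I would set up a hard instance: fix a ``base'' channel $\calE_0$ with Kraus rank exactly $r$ and output dimension $d_2$, and perturb it by a random parameter. The natural choice is to write a Stinespring dilation $V_0 : \Co^{d_1} \to \Co^{d_2}\otimes\Co^{r}$ of $\calE_0$ and perturb the isometry as $V_\theta = (\Id + \text{i}\,\eta H_\theta + \cdots) V_0$ where $H_\theta$ ranges over a net of Hermitian directions in the tangent space to the isometry manifold, with $\eta \asymp \varepsilon$. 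The dimension count $rd_1d_2$ should emerge as the real dimension of this tangent space modulo the gauge (unitary on $\Co^r$) and modulo directions that do not move the channel in diamond norm; the hypothesis $rd_2 \geq 2d_1$ is exactly what guarantees enough ``room'' in $\Co^{d_2}\otimes\Co^r$ beyond the image of $V_0$ so that the perturbations genuinely change the channel (this is the away-from-boundary condition, cf.\ the failure at $rd_2=d_1$ where Heisenberg scaling is possible). I would pack a codebook of $\exp(\Omega(rd_1d_2))$ such $\theta$'s that are pairwise $\Omega(\varepsilon)$-separated in diamond norm but whose Choi states are pairwise close in a suitable operational sense.

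Next I would analyze how much information about $\theta$ a $T$-query adaptive protocol can extract. The key quantity is the mutual information between $\theta$ (uniform over the codebook) and the protocol's final measurement outcome. By a chain-rule / ``one query at a time'' argument — the same philosophy as in the adaptive lower bounds of \cite{Oufkir_2023,oufkir2023adaptivity} and as refined in \cite{oufkir2026improved} — this is bounded by $T$ times the worst-case information gain of a single query to $\calE_\theta$ versus $\calE_0$. A single use of $\calE_\theta$ on an arbitrary (possibly entangled) input $\rho$ produces a state within trace distance $O(\eta)\,\|H_\theta\|_{\mathrm{op}}$-ish of $\calE_0(\rho)$; quantitatively, after a second-order expansion the relevant chi-squared / relative-entropy divergence between the one-query output ensembles is $O(\eta^2)$ uniformly over inputs, because the first-order term is a traceless perturbation whose square is what survives. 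Hence $I(\theta ; \text{transcript}) = O(T\eta^2) = O(T\varepsilon^2)$. Contrasting with $\log|\text{codebook}| = \Omega(rd_1d_2)$ and Fano's inequality forces $T = \Omega(rd_1d_2/\varepsilon^2)$.

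The main obstacle — and where the $\log$ factor in \cite{oufkir2026improved} presumably comes from — is making the per-query information bound $O(\varepsilon^2)$ hold \emph{uniformly and without loss} over all adaptive input states and over the entire codebook, rather than $O(\varepsilon^2\log(\cdots))$. The subtlety is twofold: (i) the adaptively chosen input at step $t$ can depend on the history, so the perturbation $H_\theta$ can be ``aligned'' adversarially with the current state; and (ii) the channel is used many times, so small first-order biases could in principle accumulate coherently rather than incoherently. I would handle (i) by exploiting the gauge/averaging structure of the codebook — choosing the random directions $H_\theta$ from an isotropic ensemble (e.g.\ Haar-random rotations of a fixed pattern, or a Gaussian orthogonal-type ensemble on the tangent space) so that, \emph{on average over $\theta$}, no input can be aligned with the perturbation, and the first-order term vanishes in expectation while the second moment is exactly $\Theta(\eta^2)$ times a dimension-independent constant. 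This is the place where an honest second-moment computation (Weingarten calculus over the $U(r)$ gauge and over the Haar directions, in the spirit of \cite{GMZFL25,yoshida2025random}) replaces a union bound and removes the logarithm. For (ii), the point is that the mutual-information chain rule already accounts for coherence for free — it bounds the information of the whole transcript, adaptive or not — so no separate martingale argument is needed once the single-step bound is tight. Assembling these pieces yields $T = \Omega(rd_1d_2/\varepsilon^2)$, matching the upper bound of \cite{mele2025optimal,chen2025quantum}.
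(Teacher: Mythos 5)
Your proposal follows the information-theoretic route (random codebook plus a Fano/Holevo chain-rule over queries), which is essentially the strategy of \cite{oufkir2026improved}; the paper deliberately takes a different path precisely because that route, as currently understood, loses a logarithmic factor. The gap in your sketch is the step you yourself flag as the main obstacle: you assert that choosing the perturbation directions from an isotropic ensemble and doing a second-moment (Weingarten) computation yields a per-query information gain of $O(\varepsilon^2)$, uniformly over all coherently and adaptively chosen inputs, with no dimension or $\log(1/\varepsilon)$ factor. No mechanism is given for this, and it is not a routine calculation. In the hard instances that achieve pairwise $\Omega(\varepsilon)$ diamond-norm separation (both in \cite{oufkir2026improved} and in this paper), the perturbation $\varepsilon U\Delta$ pushes weight $\varepsilon^2$ into a subspace orthogonal to the image of $V_0$, so the one-query output of $\calE_\theta$ is not absolutely continuous with respect to that of the base channel: the chi-squared or relative-entropy divergence you invoke is not $O(\varepsilon^2)$ --- it can be infinite, and after replacing the reference by a mixture over the codebook the standard bounds pick up exactly the $\log(d_2 r/\varepsilon)$-type factor you are trying to remove. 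Your observation that the first-order term is traceless controls the trace distance, not the entropic quantities entering the chain rule. Similarly, the claim that the chain rule ``accounts for coherence for free'' is only straightforward when intermediate measurements are made; for fully coherent sequential strategies one must track the quantum information carried in the algorithm's memory across uses, and establishing a per-use increment of $O(\varepsilon^2)$ without logarithms there is precisely what is not known.

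The paper avoids per-query information accounting entirely. It works in the quantum comb/tester formalism and proves a global operator inequality $\kettbbra{V}{V}^{\otimes n}\sqsubseteq\sum_{i}\lambda_i\Gamma_i$, where the $\Gamma_i$ are Haar-twirled $n$-combs built from $i$ copies of $\Delta$ and $n-i$ copies of $V_0$; Schur's lemma bounds $\tr\bigl(\Gamma_i^{-1}\ketbra{\gamma_i}{\gamma_i}\bigr)$ by a symmetric-subspace dimension, and optimizing the weights $\lambda_i$ shows that any sequential tester succeeds with probability at most $\exp(-Cd_1d_2)\cdot 3d_1^2d_2^2\exp\bigl(\sqrt{8n\varepsilon^2 d_1 d_2}\bigr)$, which forces $n=\Omega(d_1d_2/\varepsilon^2)$ (and $\Omega(rd_1d_2/\varepsilon^2)$ after instantiating with the net of \cite{oufkir2026improved} at output dimension $rd_2$). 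To make your route work you would need a genuinely new amortized entropic bound for coherent adaptive access; as written, the proposal would at best reproduce the known $\Omega\bigl(rd_1d_2/(\varepsilon^2\log(d_2r/\varepsilon))\bigr)$ bound rather than the claimed optimal one.
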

\Cref{thm:main} is based on a tight analysis on sets of isometries with a specific structure (which we call the ``hard'' isometry set), using the formalism of quantum combs and testers.
Recent work~\cite{oufkir2026improved} provides an instantiation of such a ``hard'' isometry set with sufficiently large cardinality and the desired separation properties, allowing our analysis to apply to their construction and yield the optimal lower bound.

As a special case of our main result, we consider quantum channels with equal input and output dimensions, i.e., $d_1 = d_2 = d$, which are simply called $d$-dimensional quantum channels.
Combined with known results on unitary tomography~\cite{haah2023query} and upper bound on quantum channel tomography~\cite{mele2025optimal,chen2025quantum}, we can fully settle the query complexity for the tomography task of $d$-dimensional quantum channels.

\begin{corollary}[Tomography of $d$-dimensional quantum channels]\label{equaldimension}
The query complexity for tomography of $d$-dimensional quantum channels
$\mathcal{E}$ with Kraus rank at most $r$, and to within diamond-norm error $\varepsilon$, is
\[
\Theta\!\left(\frac{rd^2}{\varepsilon^{\min\{r,2\}}}\right).
\]
\end{corollary}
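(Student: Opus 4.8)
The plan is to obtain \Cref{equaldimension} as a short case analysis on the Kraus rank $r$, feeding \Cref{thm:main} into the known upper bounds and the resolved unitary case. Write $q(d,r,\varepsilon)$ for the query complexity of tomography of $d$-dimensional channels of Kraus rank at most $r$ to within diamond-norm error $\varepsilon$; the goal is to show $q(d,r,\varepsilon)=\Theta\!\left(rd^2/\varepsilon^{\min\{r,2\}}\right)$, and I would treat $r=1$ and $r\ge 2$ separately since the exponent of $1/\varepsilon$ differs.

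First I would dispose of the case $r=1$, where $\min\{r,2\}=1$ and the target rate is the Heisenberg scaling $d^2/\varepsilon$. A channel of Kraus rank at most $1$ with equal input and output dimension $d$ is described by a single Kraus operator $K$ satisfying $K^\dagger K=I_d$, hence $K$ is unitary; so this is precisely unitary-channel tomography. By Haah, Kothari, O'Donnell, and Tang~\cite{haah2023query}, $q(d,1,\varepsilon)=\Theta(d^2/\varepsilon)$, which is exactly $\Theta\!\left(rd^2/\varepsilon^{\min\{r,2\}}\right)$ at $r=1$.

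Next, for $r\ge 2$ I would establish matching upper and lower bounds, now with $\min\{r,2\}=2$. The upper bound $q(d,r,\varepsilon)=O(rd^2/\varepsilon^2)$ follows immediately from Mele and Bittel~\cite{mele2025optimal} and Chen, Yu, and Zhang~\cite{chen2025quantum} by setting $d_1=d_2=d$ in their $O(rd_1d_2/\varepsilon^2)$ bound. For the lower bound, note that with $d_1=d_2=d$ the hypothesis $r\ge 2$ is exactly the away-from-boundary condition $rd_2\ge 2d_1$ required by \Cref{thm:main}; invoking that theorem yields $q(d,r,\varepsilon)=\Omega(rd^2/\varepsilon^2)$. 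Combining the two cases then gives $q(d,r,\varepsilon)=\Theta\!\left(rd^2/\varepsilon^{\min\{r,2\}}\right)$ for all $r\ge 1$.

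Since \Cref{thm:main} and the cited upper bounds are taken as given, there is no substantive obstacle in this corollary; the only points that need care are (i) recognizing that Kraus rank at most $1$ together with $d_1=d_2$ collapses to the unitary case, so the exponent genuinely drops from $2$ to $1$ there, and (ii) verifying that equal dimensions with $r\ge 2$ lie inside the away-from-boundary regime $rd_2\ge 2d_1$ so that \Cref{thm:main} applies verbatim. All of the real work is in \Cref{thm:main} itself.
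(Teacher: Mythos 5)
Your proposal is correct and follows exactly the route the paper intends: the paper derives \cref{equaldimension} by combining \cref{thm:main} (noting $d_1=d_2=d$ and $r\ge 2$ gives $rd_2\ge 2d_1$) with the $O(rd_1d_2/\varepsilon^2)$ upper bound of \cite{mele2025optimal,chen2025quantum}, and handling $r=1$ via the unitary result $\Theta(d^2/\varepsilon)$ of \cite{haah2023query}. Your two points of care (rank-$1$ square Kraus operator is unitary; $r\ge 2$ is precisely the away-from-boundary condition when dimensions are equal) are exactly the observations needed, so there is nothing to add.
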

Note that this reveals a sharp phase transition in the dependence on $\varepsilon$: it exhibits Heisenberg scaling $1/\varepsilon$ when $r=1$ and classical scaling $1/\varepsilon^2$ when $r\geq 2$.

As another special case, we consider tomography of quantum channels with input dimension $1$, which reduce to quantum state tomography.
Then, we can reproduce the recent development of the optimal sample lower bound for quantum state tomography~\cite{scharnhorst2025optimal}, which matches the known upper bound~\cite{10.1145/2897518.2897544}.
\begin{corollary}[State tomography]
Tomography of a $d$-dimensional mixed state with rank at most $r$, to within trace norm error $\varepsilon$, requires $\Omega(dr/\varepsilon^2)$ samples.
\end{corollary}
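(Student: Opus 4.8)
The plan is to obtain this as the special case $d_1=1$ of \Cref{thm:main}. A quantum channel with one-dimensional input is exactly a state-preparation channel $\calE_\rho$ that outputs a fixed $d$-dimensional state $\rho$; its minimal Stinespring dilation is a unit vector (a purification of $\rho$) in $\mathbb{C}^d\otimes\mathbb{C}^k$ with $k=\rank(\rho)$, so the Kraus rank of $\calE_\rho$ equals $\rank(\rho)$. Thus $d$-dimensional mixed states of rank at most $r$ correspond precisely to channels with input dimension $d_1=1$, output dimension $d_2=d$, and Kraus rank at most $r$, and one query to $\calE_\rho$ produces exactly one copy of $\rho$, so channel queries and state samples are the same resource.

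Next I would check that the two error metrics agree under this identification. Since $\calE_\rho-\calE_\sigma$ has a one-dimensional input, there is no ancilla to exploit and the diamond norm collapses to the trace norm, $\norm{\calE_\rho-\calE_\sigma}_{\diamond}=\norm{\rho-\sigma}_1$. Hence any algorithm that uses $n$ samples and outputs a classical description of a state within trace-norm error $\varepsilon$ is, verbatim, an $n$-query channel-tomography algorithm achieving diamond-norm error $\varepsilon$.

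Finally I would apply \Cref{thm:main} with $d_1=1$, $d_2=d$, and the given $r$. The hypothesis $rd_2\ge 2d_1$ becomes $rd\ge 2$, which holds in every nontrivial case (the sole exception $d=r=1$ being the trivial state), and the theorem yields $\Omega(rd_1d_2/\varepsilon^2)=\Omega(dr/\varepsilon^2)$ on the number of queries, i.e., samples; if $r>d$ one replaces $r$ by $d$ and recovers the full-rank bound $\Omega(d^2/\varepsilon^2)$. In particular the lower bound of \cite{scharnhorst2025optimal} is recovered. The only content beyond bookkeeping is to confirm that the ``hard'' isometry set underlying \Cref{thm:main} degenerates, at $d_1=1$, to an honest hard ensemble of purifications whose induced states are pairwise $\Omega(\varepsilon)$-separated in trace norm; this is immediate since a one-dimensional-input isometry is just a unit vector and the separation properties of the hard set are exactly what a state-tomography packing bound requires. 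I do not anticipate a real obstacle; the step most worth double-checking is the identity $\norm{\calE_\rho-\calE_\sigma}_{\diamond}=\norm{\rho-\sigma}_1$ and that the hypothesis range $rd_2\ge 2d_1$ indeed covers all $1\le r\le d$ relevant for state tomography.
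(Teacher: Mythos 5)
Your reduction is exactly the paper's intended derivation: the corollary is the special case $d_1=1$, $d_2=d$ of \cref{thm:main} (\cref{thm-1110326}), using that a trivial-input channel is a state-preparation channel whose Kraus rank equals the state's rank, whose queries are samples, and whose diamond distance equals the trace distance. The only bookkeeping point is that the precise statement \cref{thm-1110326} also assumes $r\leq d_1d_2/2$, so for $d/2<r\leq d$ one applies it with rank $\lfloor d/2\rfloor$ (by monotonicity in the rank parameter), which still yields $\Omega(dr/\varepsilon^2)$ up to constants; your "$r>d$" threshold should be adjusted accordingly.
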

Our method for this lower bound is very different from that in \cite{scharnhorst2025optimal}, which may be of independent interest.

Then, we summarize the current best upper and lower bounds for quantum channel tomography in different parameter regimes in~\Cref{tab:comparison}.

\begin{table}[h]
\centering
\newcommand{\outerstrut}{\rule[-2.8ex]{0pt}{7ex}}

\begin{tabular}{|c|c|c|c|}
\hline
\outerstrut
             & \begin{tabular}{@{}c@{}}Boundary$^{*}$ \\ $rd_2=d_1$\end{tabular}
             & \begin{tabular}{@{}c@{}} Near-boundary \\$d_1< rd_2< 2d_1$\end{tabular}
             & \begin{tabular}{@{}c@{}} Away-from-boundary\\ $ rd_2 \geq 2d_1$\end{tabular} \\ \hline
Upper bounds\outerstrut
            & $O\!\left(\dfrac{rd_1d_2}{\varepsilon}\right)$~\cite{chen2025quantum}
            & \multicolumn{2}{c|}{$O\parens*{\dfrac{rd_1d_2}{\varepsilon^2}}$~\cite{mele2025optimal,chen2025quantum}} \\ \hline
Lower bounds\outerstrut
            & $\Omega\parens*{\dfrac{rd_1d_2}{\varepsilon\log (d_2r/\varepsilon)}}$~\cite{oufkir2026improved}
            & $\Omega\parens*{rd_1d_2}$~\cite{GMZFL25}
            & $\Omega\parens*{\dfrac{rd_1d_2}{\varepsilon^2}}$ \textbf{This work} \\ \hline
\end{tabular}
\caption{Upper and lower bounds for quantum channel tomography in different parameter regimes. Note that $rd_2\geq d_1$ holds for any quantum channels. There is a phase transition from the boundary regime to away-from-boundary regime: the Heisenberg scaling $1/\varepsilon$ becomes the classical scaling $1/\varepsilon^2$. 
\\
$*$: {\footnotesize In the boundary regime of this table, the upper bound holds for Choi-state trace norm error and the lower bound hold for both Choi-state trace norm and diamond norm errors. All other bounds hold for diamond-norm error.} }
\label{tab:comparison}
\end{table}

\subsection{Related work}
\label{sub:related-work}

During the preparation of this manuscript, we became aware of a very recent update (arXiv v$3$) of \cite{oufkir2026improved}, in which their lower bound in the away-from-boundary regime $ rd_2\geq 2d_1$ is improved from $\Omega\parens*{rd_1d_2/(\varepsilon\log (d_2r/\varepsilon))}$ to $\Omega\parens*{rd_1d_2/(\varepsilon^2\log (d_2r/\varepsilon))}$, which also achieves classical scaling $1/\varepsilon^2$ and matches the upper bound up to a logarithmic factor $\log (d_2r/\varepsilon)$. Their proof relies on an information-theoretic approach, which is different from our approach for proving the lower bound.
Specifically, in our proof of \Cref{thm:main}, we provide a tight analysis for the hardness of discriminating a specific isometry family (see \cref{thm-1110122}), then we combine the isometry net instantiation provided in~\cite[arXiv v$1$]{oufkir2026improved} with our hardness result (i.e., \cref{thm-1110122}), to obtain the optimal lower bound without logarithmic factors. 
In contrast to approaches using information-theoretic tools, our analysis is based on the formalism of quantum combs and testers.

\subsection{Discussion}

This work establishes a matching lower bound of $\Omega(rd_1d_2/\varepsilon^2)$ on the number of queries needed for quantum channel tomography in the away-from-boundary regime $ rd_2\geq 2d_1$.
Combined with the prior upper bound $O\parens*{rd_1d_2/\varepsilon^2}$~\cite{mele2025optimal,chen2025quantum}, our new query lower bound fully settles the most commonly studied case of equal input and output dimensions $d_1=d_2=d$ with $r\geq 2$.
This optimal scaling is in sharp contrast to the Heisenberg scaling $\Theta(d^2/\varepsilon)$ in unitary channel tomography.

An important open question is whether one can further settle the query complexity for quantum channel tomography beyond the away-from-boundary regime $ rd_2\geq 2d_1$.

\section{Preliminaries}
\subsection{Notation}
We use $\mathcal{L}(\mathcal{H})$ to denote the set of linear operators on the Hilbert space $\mathcal{H}$. Given two orthonormal bases for $\mathcal{H}_0$ and $\mathcal{H}_1$ respectively, we can represent each linear operator from $\mathcal{H}_0$ to $\mathcal{H}_1$ by a $\dim(\mathcal{H}_1)\times \dim(\mathcal{H}_0)$ matrix and for such a matrix \(X\), we use \(\kett{X}\in \mathcal{H}_1\otimes\mathcal{H}_0\) to denote the vector obtained by flattening the matrix $X$. It is easy to see the following facts:
\[\kett{\ketbra{\psi}{\phi}}=\ket{\psi}\ket{\phi^*}, \quad\quad\quad \kett{XYZ}=X\otimes Z^\textup{T} \kett{Y},\]
where \(\ket{\phi^*}\) is the entry-wise complex conjugate of \(\ket{\phi}\) w.r.t. to a given orthonormal basis, and $Z^\textup{T}$ is the transpose of the matrix $Z$. 
The inner product can be denoted by \(\bbrakett{X}{Y}=\tr(X^\dag Y)\).
For two linear operators $X,Y$, we use $X\sqsubseteq Y$ to denote that $Y-X$ is positive semidefinite.

Let $n, m, d$ be positive integers such that $n\geq m$. Let $\mathcal{H}_1\cong\cdots\cong \mathcal{H}_n\cong\mathbb{C}^{d}$ be $n$ copies of the $d$-dimensional Hilbert space. 
Let $S\subseteq [n]=\{1,2,\ldots,n\}$ be a set of integers and $\ket{\psi}\in\mathbb{C}^d$ be a state. We use the following notation 
\[\ket{\psi}^{\otimes S}\]
to denote the state $\ket{\psi}^{\otimes |S|}$ on $\bigotimes_{i\in S} \mathcal{H}_i$.
Therefore, if $\ket{\varphi}\in \mathbb{C}^d$ is another state, then
\[\ket{\psi}^{\otimes S} \otimes\ket{\varphi}^{\otimes [n]\setminus S}\]
denotes the state $\bigotimes_{i=1}^n \ket{x_i}$ on $\bigotimes_{i=1}^n \mathcal{H}_i$ where $\ket{x_i}=\ket{\psi}$ for $i\in S$, and $\ket{x_i}=\ket{\varphi}$ otherwise.

\subsection{Quantum channels}

A quantum channel with input dimension $d_1$ and output dimension $d_2$ is described by a linear map $\mathcal{E}:\mathcal{L}(\mathbb{C}^{d_1})\rightarrow\mathcal{L}(\mathbb{C}^{d_2})$ such that $\mathcal{E}$ is completely positive and trace-preserving (see, e.g., \cite{NC10,watrous2018theory,hayashi2017quantum}). 

In the Kraus representation~\cite{kraus1983states}, a quantum channel $\mathcal{E}$ is written as
\[\mathcal{E}(\rho)=\sum_{i=1}^r E_i \rho E_i^\dag,\]
where $E_i: \mathbb{C}^{d_1}\rightarrow\mathbb{C}^{d_2}$ are non-zero linear operators that satisfy $\sum_{i=1}^r E_i^\dag E_i=I$, which are called Kraus operators. We can always find a set of $E_i$ such that $\tr(E_i^\dag E_j)=0$ for $i\neq j$, then those $E_i$ are called orthogonal Kraus operators and $r$ is called the \textit{Kraus rank}. 
Note that $r$ must satisfy $d_1/d_2\leq r\leq d_1d_2$.
A quantum channel that has Kraus rank $r=1$ is an isometry channels $\mathcal{V}=V(\cdot) V^\dag$, where $V:\mathbb{C}^{d_1}\rightarrow\mathbb{C}^{d_2}$ is an isometry operator, i.e., $V^\dag V=I_{d_1}$, and it must hold that $d_2\geq d_1$.

\begin{notation}
We use $\qchannel_{d_1,d_2}^r$ to denote the set of all quantum channels $\mathcal{E}:\mathcal{L}(\mathbb{C}^{d_1})\rightarrow \mathcal{L}(\mathbb{C}^{d_2})$ that have Kraus rank at most $r$.
In particular, we use $\isochannel_{d_1,d_2}$ to denote the set of isometry channels with input dimension $d_1$ and output dimension $d_2$, which is equivalent to $\qchannel_{d_1,d_2}^1$. 
\end{notation}

In the Choi-Jamio{\l}kowski representation~\cite{choi1975completely,jamiolkowski1972linear,jamiolkowski1972linear}, $\mathcal{E}$ is represented by the Choi-Jamio{\l}kowski operator 
\[C_\mathcal{E}=(\mathcal{E}\otimes \mathcal{I})(\kettbbra{I}{I})\in\mathcal{L}(\mathbb{C}^{d_2}\otimes\mathbb{C}^{d_1}),\]
where $\kett{I}=\sum\limits_{i}\ket{i}\ket{i}\in \mathbb{C}^{d_1}\otimes \mathbb{C}^{d_1}$ is an unnormalized maximally entangled state. We may simply call it the Choi operator. 
Note that we can write $C_\mathcal{E}=\sum_{i=1}^r\kettbbra{E_i}{E_i}$, where $E_i$ are orthogonal Kraus operators and thus $\kett{E_i}$ are pairwise orthogonal vectors. Therefore, the Kraus rank equals the rank of the Choi operator.

\paragraph{Stinespring dilation.}
Using the Stinespring dilation~\cite{stinespring1955positive}, we can also write a quantum channel $\mathcal{E}$ with Kraus operators $\{E_i\}_{i=1}^r$ as
\begin{equation}\label{eq-1230155}
\mathcal{E}(\cdot)=\tr_{\mathcal{H}_\mathrm{anc}}(V(\cdot) V^\dag),
\end{equation}
where $\mathcal{H}_\mathrm{anc}\cong \mathbb{C}^{r}$ and $V=\sum_{i=1}^r\ket{i}_\mathrm{anc}\otimes E_i$ is an isometry operator. 
By this, one can notice that $rd_2\geq d_1$ must hold.
An isometry channel $\mathcal{V}=V(\cdot)V^\dag$ that satisfies \cref{eq-1230155} is called a dilation of $\mathcal{E}$. 
Suppose $\mathcal{V}_1$ is a dilation of $\mathcal{E}$, then $\mathcal{V}_2$ is a dilation of $\mathcal{E}$ if and only if they differ by a unitary on $\mathcal{H}_\mathrm{anc}$, i.e., $V_2=(U\otimes I_{d_2}) V_1$ for $U:\mathcal{H}_\mathrm{anc}\rightarrow\mathcal{H}_\mathrm{anc}$ a unitary.
Conversely, given an isometry $V\in \isochannel_{d_1,rd_2}$, the channel $\mathcal{E}(\cdot)=\tr_{r}(V(\cdot) V^\dag)$ obtained from $V$ by tracing out an $r$-dimensional subsystem has Kraus rank at most $r$.

\subsection{Quantum combs and testers}\label{sec-6290132}
The quantum comb~\cite{chiribella2008quantum,chiribella2009theoretical} is a powerful tool to describe (higher) transformations of quantum processes. Specifically, the Choi-Jamio{\l}kowski representation of quantum channels (i.e., transformations of quantum states) can be generalized to a higher-level concept (i.e., transformations of quantum processes), which is called \textit{quantum comb}.
\begin{definition}[Quantum comb~\cite{chiribella2009theoretical}]\label{def-681501}
For an integer $n\geq 1$, a quantum $n$-comb defined on a sequence of $2n$ Hilbert spaces $(\mathcal{H}_0,\mathcal{H}_1,\ldots,\mathcal{H}_{2n-1})$ is a positive semidefinite operator $X$ on $\bigotimes_{j=0}^{2n-1} \mathcal{H}_{j}$ such that there exists a sequence of operators $X^{(n)}, X^{(n-1)},\ldots, X^{(1)}, X^{(0)}$ such that
\begin{equation}\label{eq-681546}
\begin{split}
\tr_{\mathcal{H}_{2j-1}}\!\left(X^{(j)}\right)&=I_{\mathcal{H}_{2j-2}}\otimes X^{(j-1)},\quad 1\leq j \leq n,  
\end{split}
\end{equation}
where $X^{(n)}=X$ and $X^{(0)}=1$.
\end{definition}

We can easily see the following facts:
A quantum $1$-comb is simply the Choi-Jamio{\l}kowski operator of a quantum channel.
Any convex combination of quantum $n$-combs is also a quantum $n$-comb.

Then, we introduce the link product ``$\star$''.
\begin{definition}[Link product ``$\star$''~\cite{chiribella2008quantum,chiribella2009theoretical}]
\label{def-720255}
Suppose $X$ is a linear operator on $\mathcal{H}_{\bm{i}}=\mathcal{H}_{i_1}\otimes\mathcal{H}_{i_2}\otimes\cdots\otimes\mathcal{H}_{i_{n}}$ and $Y$ is a linear operator on $\mathcal{H}_{\bm{j}}=\mathcal{H}_{j_1}\otimes\mathcal{H}_{j_2}\otimes\cdots\otimes\mathcal{H}_{j_{m}}$,
where $\bm{i}=(i_1,\ldots,i_n)$ is a sequence of pairwise distinct indices, and likewise for $\bm{j}=(j_1,\ldots,j_m)$.
Let $\bm{a}=\bm{i}\cap\bm{j}$ be the set of indices in both $\bm{i}$ and $\bm{j}$ and $\bm{b}=\bm{i}\cup\bm{j}$ be the set of indices in either $\bm{i}$ or $\bm{j}$.
Then, the combination of $X$ and $Y$ is defined by
\[X\star Y= \tr_{\mathcal{H}_{\bm{a}}}\!\left(X^{\textup{T}_{\mathcal{H}_{\bm{a}}}} \cdot Y\right)=\tr_{\mathcal{H}_{\bm{a}}}\!\left(X\cdot Y^{\textup{T}_{\mathcal{H}_{{\bm{a}}}}}\right),\]
where $\mathcal{H}_{\bm{a}}$ means the tensor product of subsystems labeled by the indices in $\bm{a}$, $\textup{T}_{\mathcal{H}_{\bm{a}}}$ means the partial transpose on $\mathcal{H}_{\bm{a}}$, both $X$ and $Y$ are treated as linear operators on $\mathcal{H}_{\bm{b}}$, extended by tensoring with the identity operator as needed.
\end{definition}

The link product describes the combination of quantum combs. For example, suppose $X$ is an $n$-comb on $(\mathcal{H}_0,\mathcal{H}_1,\ldots,\mathcal{H}_{2n-1})$ and $Y$ is an $(n-1)$-comb on $(\mathcal{H}_1,\mathcal{H}_2\,\ldots,\mathcal{H}_{2n-2})$, then
\begin{align}
X\star Y= \tr_{\mathcal{H}_{1:2n-2}}\!\left(X^{\textup{T}_{\mathcal{H}_{1:2n-2}}}\cdot (I_{\mathcal{H}_{2n-1}}\otimes Y \otimes I_{\mathcal{H}_0})\right)=\tr_{\mathcal{H}_{1:2n-2}}\!\left(X\cdot (I_{\mathcal{H}_{2n-1}}\otimes Y^{\textup{T}} \otimes I_{\mathcal{H}_0})\right)\nonumber
\end{align}
turns out to be a $1$-comb on $(\mathcal{H}_0,\mathcal{H}_{2n-1})$.
The link product also has many good properties.
It preserves the L\"owner order: if $X,Y\sqsupseteq 0$ then $X\star Y\sqsupseteq 0$~\cite[Theorem 2]{chiribella2009theoretical}. 
It is commutative $X\star Y=Y \star X$, and associative $(X\star Y)\star Z=X\star (Y\star Z)$ whenever $X,Y,Z$ do not share a common subsystem (i.e., there is no subsystem that is a subsystem of all three).


\subsubsection{Quantum channel testers} 
A \textit{quantum channel tester} means a quantum algorithm that can make multiple queries to an unknown quantum channel and then produces a classical output.
We adopt the quantum tester formalism based on Choi-Jamio{\l}kowski representation (see, e.g., \cite{chiribella2009theoretical,bavaresco2021strict,bavaresco2022unitary}), which provides a practical framework for studying various classes of quantum testers, such as parallel and sequential ones.

Suppose a quantum channel tester uses $n$ queries to an unknown quantum channel $\mathcal{E}$. 
We label the input and output systems of the $i$-th query to $\mathcal{E}$ as $\mathcal{H}_{\mathrm{A},i}$ and $\mathcal{H}_{\mathrm{B},i}$, i.e., the $i$-th copy of the unknown channel is a linear map from $\mathcal{L}(\mathcal{H}_{\mathrm{A},i})$ to $\mathcal{L}(\mathcal{H}_{\mathrm{B},i})$.

In a sequential tester, one sends a quantum system through the first use of the channel $\mathcal{E}$ and then feeds the resulting output into subsequent uses, potentially along with ancillary systems, while allowing arbitrary CPTP maps to act between uses of $\mathcal{E}$. After all $n$ uses of the channel $\mathcal{E}$, a POVM is performed on the final output state. In other words, sequential testers can represent coherent and adaptive query-access algorithms.

\begin{definition}[Sequential tester]
A sequential tester that uses $n$ queries to an unknown channel is a set of linear operators $\{T_i\}_i$ for $T_i\in\mathcal{L}(\bigotimes_{j=1}^n \mathcal{H}_{\mathrm{A},j}\otimes\mathcal{H}_{\mathrm{B},j})$ such that $T_i\sqsupseteq 0$ and $\sum_i T_i$ is a quantum $(n+1)$-comb on $(\mathcal{H}_0,\mathcal{H}_{\mathrm{A},1},\mathcal{H}_{\mathrm{B},1},\ldots,\mathcal{H}_{\mathrm{A},n},\mathcal{H}_{\mathrm{B},n},\mathcal{H}_{n+1})$, where $\mathcal{H}_0\cong\mathcal{H}_{n+1}\cong\mathbb{C}$ are one-dimensional.
\end{definition}
It is known that any sequential tester can be realized by a sequential query-access algorithm and any sequential query-access algorithm can be described by a sequential tester~\cite{chiribella2009theoretical,bavaresco2022unitary}.
When we apply a sequential tester $\{T_i\}_i$ to $n$ queries to a quantum channel $\mathcal{E}$, we get the classical outcome $i$ with probability 
\begin{equation*}
p_i=T_i\star C_\mathcal{E}^{\otimes n}= \tr(T_i (C_{\mathcal{E}}^{\otimes n})^\mathrm{T})=\tr(T_i^\mathrm{T} C_{\mathcal{E}}^{\otimes n}),
\end{equation*}
where $C_\mathcal{E}^{\otimes n}$
is the Choi operator of all $n$ queries to the channel $\mathcal{E}$ and $(\cdot)^{\mathrm{T}}$ denotes matrix transposition.

\subsubsection{Discrimination of quantum channels}
Suppose $\mathcal{N}$ is a finite set of quantum channels. Then, the discrimination problem for channels in $\mathcal{N}$ is defined as follows.
\begin{problem}
Suppose $\mathcal{E}$ is uniformly randomly chosen from the set $\mathcal{N}$. The algorithm (or tester) can make $n$ queries to the channel $\mathcal{E}$ and the goal is to identify $\mathcal{E}$.
\end{problem}
Suppose $\{T_\mathcal{E}\}_{\mathcal{E}\in\mathcal{N}}$ is a sequential tester for this discrimination task where $T_\mathcal{E}$ corresponds to outputting the label $\mathcal{E}$. Then, the success probability can be expressed as
\[\Pr[\textup{success}]=\frac{1}{|\mathcal{N}|} \sum_{\mathcal{E}\in\mathcal{N}} T_\mathcal{E}\star C_{\mathcal{E}}^{\otimes n},\]
where $C_{\mathcal{E}}$ denotes the (unnormalized) Choi state of $\mathcal{E}$.
We say an algorithm solves the discrimination problem if the success probability is higher than $2/3$.

\section{Hardness of discriminating isometries}\label{sec-1110117}
\subsection{Hard instance}\label{sec-1110250}
Suppose $d_1,d_2$ are positive integers such that $d_2\geq 2d_1$ and $\varepsilon\in (0,1)$.
Define the Hilbert space $\mathcal{H}_{\mathrm{A}}\cong \mathbb{C}^{d_1}$ with an orthonormal basis $\{\ket{1}_\mathrm{A},\ldots,\ket{d_1}_\mathrm{A}\}$ and $\mathcal{H}_{\mathrm{B}}\cong \mathbb{C}^{d_2}$ with an orthonormal basis $\{\ket{1}_\mathrm{B},\ldots,\ket{d_2}_\mathrm{B}\}$.
Define the isometries $V_0,\Delta:\mathcal{H}_{\mathrm{A}}\rightarrow \mathcal{H}_{\mathrm{B}}$ as follows
\[V_0\coloneqq \sum_{i=1}^{d_1}\ket{i}_\mathrm{B}\bra{i}_\mathrm{A},\quad\textup{and}\quad \Delta\coloneqq \sum_{i=1}^{d_1}\ket{d_1+i}_\mathrm{B}\bra{i}_\mathrm{A}.\]
Then, for any $U\in\mathbb{U}_{d_2-d_1}$, we define the isometry $V_{\varepsilon,U}:\mathcal{H}_{\mathrm{A}}\rightarrow\mathcal{H}_{\mathrm{B}}$ as
\begin{equation}\label{eq-140115}
\begin{split}
V_{\varepsilon,U}& \coloneqq (I_{d_1}\oplus U)\left(\sqrt{1-\varepsilon^2}V_0+\varepsilon \Delta\right) \\
&= \sqrt{1-\varepsilon^2}V_0 + \varepsilon U\Delta,
\end{split}
\end{equation}
where $I_{d_1}=\sum_{i=1}^{d_1}\ket{i}_{\mathrm{B}}\bra{i}_{\mathrm{B}}$, and $U\in\mathbb{U}_{d_2-d_1}$ acts on the subspace spanned by $\{\ket{d_1+1}_\mathrm{B},\ldots,\ket{d_2}_\mathrm{B}\}$.
Then, any subset of 
\[\{V_{\varepsilon,U} \,|\, U\in\mathbb{U}_{d_2-d_1}\}\]
is called a \textit{``hard''} isometry set.

Note that in the above construction, the orthonormal bases of $\mathcal{H}_\mathrm{A}$ and $\mathcal{H}_\mathrm{B}$ are chosen arbitrarily. Therefore, this construction can also be described in an abstract way.
\begin{definition}\label{def-1130036}
Let $\mathcal{H}_{\mathrm{A}}\cong \mathbb{C}^{d_1}$, $\mathcal{H}_{\mathrm{B}}\cong \mathbb{C}^{d_2}$.
Let $V_0:\mathcal{H}_\mathrm{A}\rightarrow\mathcal{H}_\mathrm{B}$ be an arbitrary but fixed isometry and let $\mathcal{H}_0$ be the image of $V_0$. Then, any subset of 
\[\left\{\sqrt{1-\varepsilon^2} V_0+\varepsilon \Delta\,\,\Big|\,\, \Delta:\mathcal{H}_\mathrm{A}\rightarrow\mathcal{H}_{0}^\perp \textup{ is an isometry}\right\}\]
is called a ``hard'' isometry set. 
\end{definition}

\subsection{Hardness of the discrimination problem}
Then, we have the following result.
\begin{theorem}\label{thm-1110122}
Suppose $\mathcal{N}$ is a finite ``hard'' isometry set (see \cref{sec-1110250}) with cardinality $|\mathcal{N}|\geq \exp(Cd_1d_2)$ for a universal constant $C$.
Then, any algorithm that solves the discrimination problem for the isometries in $\mathcal{N}$ requires at least $n\geq \Omega(d_1d_2/\varepsilon^2)$ queries.
\end{theorem}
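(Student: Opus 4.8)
The plan is to lower-bound the number of queries $n$ needed to discriminate a ``hard'' isometry set $\mathcal{N}$ of size $\exp(\Omega(d_1 d_2))$ by analyzing the success probability of an arbitrary sequential tester directly in the quantum-comb formalism. Writing $\mathcal{E}$ uniformly random in $\mathcal{N}$ and $\{T_{\mathcal{E}}\}$ for the tester, the success probability is $\frac{1}{|\mathcal{N}|}\sum_{\mathcal{E}} T_{\mathcal{E}}\star C_{\mathcal{E}}^{\otimes n}$, and since $\sum_{\mathcal{E}} T_{\mathcal{E}} = T$ is a single $(n+1)$-comb with $T_{\mathcal{E}} \sqsubseteq T$, the success probability is at most $\frac{1}{|\mathcal{N}|}\sum_{\mathcal{E}} T \star C_{\mathcal{E}}^{\otimes n}$ restricted appropriately — more usefully, I want to bound it by $\frac{1}{|\mathcal{N}|} \| \sum_{\mathcal{E}} (\text{something}) \|$ via an operator-norm / pretty-good-measurement style inequality: $\Pr[\text{success}] \le \frac{1}{|\mathcal{N}|}\, T \star \bigl(\text{a convex-combination-like object built from the } C_{\mathcal{E}}^{\otimes n}\bigr)$, and then show that averaging $C_{\mathcal{E}}^{\otimes n}$ over $\mathcal{N}$ is close to averaging over the full Haar ensemble $U \sim \mathbb{U}_{d_2 - d_1}$ (this is where $|\mathcal{N}| \ge \exp(Cd_1d_2)$ enters, via a net/concentration argument — but since the theorem only asks me to use the abstract hard set, I expect to reduce to the case $\mathcal{N}$ is itself this large and then compare to Haar).

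The technical heart is to estimate, for the Haar average, the quantity $\mathbb{E}_U\bigl[ T \star C_{V_{\varepsilon,U}}^{\otimes n}\bigr]$ and compare it to $\frac{1}{|\mathcal{N}|}\sum_{\mathcal{E}}(\cdots)$. The key structural fact is that $V_{\varepsilon,U} = \sqrt{1-\varepsilon^2}\,V_0 + \varepsilon\, U\Delta$ with $V_0, \Delta$ fixed and $\mathrm{Im}(V_0)\perp\mathrm{Im}(\Delta)$, so $C_{V_{\varepsilon,U}} = \kettbbra{V_{\varepsilon,U}}{V_{\varepsilon,U}}$ expands into four terms indexed by $\{V_0, U\Delta\}$, of which the cross terms carry one factor of $\varepsilon\sqrt{1-\varepsilon^2}$ and the $\Delta\Delta$ term carries $\varepsilon^2$. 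Tensoring $n$ copies and taking $\mathbb{E}_U$, the surviving (block-diagonal-in-$U$-sectors) contributions are controlled by Weingarten calculus on $\mathbb{U}_{d_2-d_1}$: the dominant ``signal'' that lets a tester distinguish instances comes from terms where $U$ appears in a balanced way, and these come with a power of $\varepsilon$ per query but are suppressed by a factor $\sim 1/(d_2 - d_1) \ge 1/d_2$ (more precisely by inverse-dimension Weingarten factors) unless the number of queries is large. Carefully bookkeeping which monomials in the $n$-fold expansion survive the Haar integral, one should get that the ``advantage'' of the tester over random guessing is bounded by something like $(n\varepsilon^2 / d_1 d_2)^{c}$ or $\exp(c\, n\varepsilon^2/(d_1 d_2)) / |\mathcal{N}|$ plus lower-order terms, which forces $n = \Omega(d_1 d_2/\varepsilon^2)$ to make it exceed $2/3$.

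In executing this I would proceed in the following order: (i) reduce to bounding $\Pr[\text{success}]$ for a fixed $(n+1)$-comb $T$ using $T_{\mathcal{E}}\sqsubseteq T$ and the comb normalization (iterated partial-trace = identity) constraints from \cref{def-681501}; (ii) replace the average over $\mathcal{N}$ with the Haar average over $U\in\mathbb{U}_{d_2-d_1}$, paying an error term controlled by $|\mathcal{N}|\ge\exp(Cd_1d_2)$ (a covering/typicality estimate); (iii) expand $C_{V_{\varepsilon,U}}^{\otimes n}$ and apply Weingarten calculus, identifying the leading term (the ``all-$V_0$'' term, which is $U$-independent and just reproduces the trivial $1/|\mathcal{N}|$ bound after linking with $T$) and bounding every other term using the comb constraints on $T$ to turn $T\star(\cdots)$ into a trace that is at most $1$ times a combinatorial/dimensional factor; (iv) sum the geometric-type series in $n\varepsilon^2/(d_1d_2)$ and conclude. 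I expect step (iii) — the Weingarten bookkeeping combined with the comb normalization to bound each $T\star(\text{monomial})$ term, rather than naively $\|T\|_\infty \cdot \|\text{monomial}\|_1$ which would be too lossy — to be the main obstacle, since one must exploit that $T$ is a comb (not an arbitrary PSD operator) to avoid losing factors of $d_2^n$, and must track that the only way a monomial contributes at order $\varepsilon^{2k}$ is with a Weingarten suppression of order $(d_2-d_1)^{-k}$ up to multiplicities bounded by $n^{O(k)}$.
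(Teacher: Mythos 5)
There is a genuine gap, and it sits exactly where you flag it. Your step (ii) --- replacing the empirical average over $\mathcal{N}$ by the Haar average over $U\in\mathbb{U}_{d_2-d_1}$, ``paying an error term controlled by $|\mathcal{N}|\ge\exp(Cd_1d_2)$'' --- is not available: the theorem assumes only a cardinality lower bound on an otherwise arbitrary hard set, so $\frac{1}{|\mathcal{N}|}\sum_{V\in\mathcal{N}}\kettbbra{V}{V}^{\otimes n}$ need not be close to $\E_U\big[\kettbbra{V_{\varepsilon,U}}{V_{\varepsilon,U}}^{\otimes n}\big]$ in any norm that survives linking with a tester; indeed, for $n\sim d_1d_2/\varepsilon^2$ the states $\kett{V}^{\otimes n}$ of well-separated elements are nearly orthogonal, which is precisely the resource a tester exploits, so no covering/typicality estimate of the kind you invoke can hold at this moment order. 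The paper's argument uses the cardinality only through the prefactor $1/|\mathcal{N}|\le\exp(-Cd_1d_2)$ and then bounds the \emph{sum} $\sum_{V}T_V\star\kettbbra{V}{V}^{\otimes n}$ uniformly over all hard sets: one dominates every single $\kettbbra{V}{V}^{\otimes n}$ in L\"owner order by one fixed operator $\sum_i\lambda_i\Gamma_i$, where $\Gamma_i$ is the Haar average of the ``$i$ copies of $\Delta$'' component $\ketbra{\gamma_i}{\gamma_i}$; the crucial point (\cref{lemma-12270148}) is that each $\Gamma_i$ is itself a quantum $n$-comb, so the link of $\sum_V T_V$ (an $(n+1)$-comb) with the normalized combination is exactly $1$ and the whole sum is at most $\sum_i\lambda_i$, with no loss of dimension factors and no need to control cross terms of the tester at all. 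This is the mechanism your step (iii) is missing: you correctly anticipate that $\|T\|_\infty\cdot\|\cdot\|_1$ is too lossy and that the comb structure must be used, but you do not supply how, and Weingarten bookkeeping of individual monomials does not by itself give it.

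The second missing ingredient is how the L\"owner domination is certified and what it costs. The paper reduces $\kettbbra{V}{V}^{\otimes n}\sqsubseteq\sum_i\lambda_i\Gamma_i$ to $\sum_i\lambda_i^{-1}\binom{n}{i}(1-\varepsilon^2)^{n-i}\varepsilon^{2i}\tr\!\big(\Gamma_i^{-1}\ketbra{\gamma_i}{\gamma_i}\big)\le 1$ via the pseudo-inverse criterion (\cref{fact-5122103}), and bounds $\tr\!\big(\Gamma_i^{-1}\ketbra{\gamma_i}{\gamma_i}\big)$ not by Weingarten calculus but by the clean group-averaging fact of \cref{lemma-12271509} (Schur's lemma: the trace against the inverted orbit average is at most the dimension of the relevant invariant subspace), with the dimension $\binom{d_1d_2+i-2}{i}$ computed in \cref{lemma-12271712}. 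Optimizing the $\lambda_i$ then gives $\sum_i\lambda_i\le 3d_1^2d_2^2\exp\!\big(\sqrt{8n\varepsilon^2 d_1 d_2}\big)$, so the success probability is at most $\exp(-Cd_1d_2)\cdot\mathrm{poly}\cdot\exp\!\big(\sqrt{8n\varepsilon^2 d_1d_2}\big)$ and one needs $\sqrt{n\varepsilon^2 d_1d_2}\gtrsim d_1d_2$, i.e.\ $n\gtrsim d_1d_2/\varepsilon^2$. Note that your guessed final form $\exp\!\big(c\,n\varepsilon^2/(d_1d_2)\big)/|\mathcal{N}|$ cannot be right: it would force $n=\Omega(d_1^2d_2^2/\varepsilon^2)$, contradicting the $O(d_1d_2/\varepsilon^2)$ tomography upper bound which already solves this discrimination task; the correct exponent is $\sqrt{n\varepsilon^2d_1d_2}$, and getting that square root is exactly the outcome of the $\lambda_i$-optimization you have not carried out.
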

\begin{proof}
Without loss of generality, we can assume $\mathcal{N}$ is a finite subset of $\{V_{\varepsilon,U}\,|\, U\in\mathbb{U}_{d_2-d_1}\}$ for $V_{\varepsilon,U}$ defined in \cref{eq-140115}.
Let $B= 2e^4$ be a constant. Suppose there is an algorithm that solves the discrimination problem using $n$ queries.
If $n > \frac{1}{B}d_1d_2/\varepsilon^2$, there is nothing to prove.
Otherwise we assume $n\leq \frac{1}{B}d_1d_2/\varepsilon^2$.

Note that each element $V$ in $\mathcal{N}$ is of the form
\begin{align}
V&= \sqrt{1-\varepsilon^2} V_0 + \varepsilon U\Delta,\nonumber 
\end{align}
where $U\in \mathbb{U}_{d_2-d_1}$.

Suppose the algorithm for distinguishing the net is described by a tester $\{T_V\}_{V\in\mathcal{N}}$.
Then, the success probability is
\begin{align}
\Pr[\textup{success}]&=\frac{1}{|\mathcal{N}|}\cdot \sum_{V\in\mathcal{N}} T_V \star \kettbbra{V}{V}^{\otimes n} \nonumber \\
&\leq \exp(-Cd_1d_2)\cdot \sum_{V\in\mathcal{N}} T_V \star \kettbbra{V}{V}^{\otimes n}. \nonumber
\end{align}
Here, $\kettbbra{V}{V}^{\otimes n}$ is an $n$-comb on $(\mathcal{H}_{\mathrm{A},1},\mathcal{H}_{\mathrm{B},1},\ldots,\mathcal{H}_{\mathrm{A},n},\mathcal{H}_{\mathrm{B},n})$, where $\mathcal{H}_{\mathrm{A},j}$ and $\mathcal{H}_{\mathrm{B},j}$ denote the input and output spaces of the $j$-th query to $V$, respectively.
Note that for $V\in\mathcal{N}$,
\[\kett{V}^{\otimes n}=\sum_{i=0}^n \left(\sqrt{1-\varepsilon^2}\right)^{n-i}\varepsilon^i  \sum_{\substack{S\subseteq[n]\\ |S|=i}} \kett{V_0}^{\otimes [n]\setminus S}\otimes \kett{U\Delta}^{\otimes S}.\]
For $i\in \{0,1,\ldots,n\}$, we define the state
\begin{equation}\label{eq-12270131}
\ket{\gamma_i}\coloneqq \frac{1}{\sqrt{\binom{n}{i}}} \sum_{\substack{S\subseteq [n]\\ |S|=i}}\kett{V_0}^{\otimes [n]\setminus S}\otimes \kett{\Delta}^{\otimes S}.
\end{equation}
Note that $\ket{\gamma_i}$ are pairwise orthogonal, and for any $V\in\mathcal{N}$, there exists a $U\in\mathbb{U}_{d_2-d_1}$ such that
\begin{equation}\label{eq-12271523}
\kett{V}^{\otimes n}=\sum_{i=0}^n\left(\sqrt{1-\varepsilon^2}\right)^{n-i}\varepsilon^i \sqrt{\binom{n}{i}} U^{\otimes n}\ket{\gamma_i},
\end{equation}
where $U^{\otimes n}$ acts as $(I_{d_1}\oplus U)^{\otimes n}$ on $\bigotimes_{i=1}^n \mathcal{H}_{\mathrm{B},i}$.
Next, we define the operator $\Gamma_i$ as
\begin{equation}\label{eq-12272104}
\Gamma_i\coloneqq \E_{U\sim\mathbb{U}_{d_2-d_1}}\left[U^{\otimes n}\ketbra{\gamma_i}{\gamma_i} U^{\dag\otimes n}\right],
\end{equation}
where $U^{\otimes n}$ acts as $(I_{d_1}\oplus U)^{\otimes n}$ on $\bigotimes_{i=1}^n \mathcal{H}_{\mathrm{B},i}$.
Note that $\supp(\Gamma_i)$ are also pairwise orthogonal, which can be easily seen from the fact that $\ket{\gamma_i}$ contains different number of $\kett{V_0}$ for different $i$.

If we can find some positive numbers $\lambda_0,\ldots,\lambda_n$ such that, for any $V\in\mathcal{N}$,
\begin{equation}\label{eq-12271213}
\kettbbra{V}{V}^{\otimes n}\sqsubseteq \sum_{i=0}^n \lambda_i \Gamma_i,
\end{equation}
then the success probability can be upper bounded as
\begin{align}
\Pr[\textup{success}]&\leq \exp(-C d_1d_2)\cdot \sum_{V\in \mathcal{N}} T_V\star \kettbbra{V}{V}^{\otimes n} \nonumber \\
&\leq \exp(-C d_1d_2)\cdot \sum_{V\in \mathcal{N}} T_V\star \sum_{j=0}^n \lambda_j\Gamma_j \nonumber \\
&= \exp(-C d_1d_2)\cdot \sum_{k=0}^n \lambda_k \cdot \sum_{V\in \mathcal{N}} T_V\star \sum_{j=0}^n \frac{\lambda_j}{\sum_{k=0}^n\lambda_k}\Gamma_j \nonumber \\
&= \exp(-Cd_1d_2)\cdot \sum_{i=0}^n \lambda_i, \label{eq-12272057}
\end{align}
where \cref{eq-12272057} is because 
\begin{itemize}
\item $\sum_{V\in\mathcal{N}} T_V$ is an (n+1)-comb with input and output dimensions $1$, and
\item $\sum_{j=0}^n \frac{\lambda_j}{\sum_{k=0}^n\lambda_k}\Gamma_j$ is an $n$-comb since $\Gamma_i$ is an $n$-comb (due to \cref{lemma-12270148}) and convex combination of $n$-combs is also an $n$-comb, 
\end{itemize}
so that their contraction evaluates to $1$.

By \cref{lemma-12281107}, there are $\{\lambda_i\}_{i=0}^n$ with $\sum_{i=0}^n\lambda_i\leq 3d_1^2d_2^2\exp(\sqrt{8n\varepsilon^2d_1d_2})$. Therefore, we know that the success probability can be bounded by 
\[\Pr[\textup{success}]\leq \exp(-Cd_1d_2)\cdot 3d_1^2d_2^2\exp\!\left(\sqrt{8n\varepsilon^2d_1d_2}\right).\]
If we want the success probability being at least $2/3$, we have
\[\sqrt{8n\varepsilon^2d_1d_2}\geq Cd_1d_2-2\ln(d_1d_2)+\ln(2/9),\]
which means
\[n\geq \Omega(d_1d_2/\varepsilon^2).\]
\end{proof}

\subsection{Technical lemmas}
\begin{lemma}\label{lemma-12270148}
Let $\ket{\gamma_i}$ be the state defined in \cref{eq-12270131}. Then, $\ketbra{\gamma_i}{\gamma_i}$ is an $n$-comb. This further means $\Gamma_i$ defined in \cref{eq-12272104} is an $n$-comb.
\end{lemma}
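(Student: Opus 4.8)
The plan is to show directly that $\ketbra{\gamma_i}{\gamma_i}$ satisfies the defining recursion of a quantum $n$-comb from Definition \ref{def-681501}, by exhibiting the sequence of reduced operators $X^{(n)}, X^{(n-1)}, \dots, X^{(0)}$ and verifying the trace conditions \eqref{eq-681546}. The key structural observation is that $\kett{V_0}$ and $\kett{\Delta}$ are both of the form $\kett{W}$ for isometries $W:\mathcal{H}_\mathrm{A}\to\mathcal{H}_\mathrm{B}$, so $\bbrakett{V_0}{V_0} = \tr(V_0^\dagger V_0) = d_1$ and likewise $\bbrakett{\Delta}{\Delta} = d_1$, while $\bbrakett{V_0}{\Delta} = \tr(V_0^\dagger \Delta) = 0$ since $V_0$ and $\Delta$ have orthogonal ranges. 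Crucially, $\kettbbra{V_0}{V_0}$ and $\kettbbra{\Delta}{\Delta}$ are each (unnormalized) $1$-combs on $(\mathcal{H}_{\mathrm{A},j},\mathcal{H}_{\mathrm{B},j})$, since tracing out $\mathcal{H}_{\mathrm{B},j}$ gives $V_0^\dagger V_0 = I_{d_1}$ (resp.\ $\Delta^\dagger\Delta = I_{d_1}$) on $\mathcal{H}_{\mathrm{A},j}$ — indeed these are the Choi operators of the isometry channels $V_0(\cdot)V_0^\dagger$ and $\Delta(\cdot)\Delta^\dagger$.

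The main point is to establish a "chain" property: define, for $0\le k \le n$, the partial-tracing of $\ketbra{\gamma_i}{\gamma_i}$ over the last $k$ pairs $(\mathcal{H}_{\mathrm{A},n-k+1},\mathcal{H}_{\mathrm{B},n-k+1}),\dots,(\mathcal{H}_{\mathrm{A},n},\mathcal{H}_{\mathrm{B},n})$; I claim this equals a \emph{scalar multiple} of a uniform-superposition-type operator on the first $n-k$ pairs. Writing $\ket{\gamma_i^{(m)}}$ for the analogue of \eqref{eq-12270131} with $n$ replaced by $m$ (and $i$ by $\min\{i,m\}$, appropriately), the recursion to verify is roughly: tracing out $\mathcal{H}_{\mathrm{B},m}$ from $\ketbra{\gamma_i^{(m)}}{\gamma_i^{(m)}}$ yields $I_{\mathcal{H}_{\mathrm{A},m}}$ tensored with a nonnegative combination of $\ketbra{\gamma_{i}^{(m-1)}}{\gamma_{i}^{(m-1)}}$ and $\ketbra{\gamma_{i-1}^{(m-1)}}{\gamma_{i-1}^{(m-1)}}$. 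This works because, upon expanding $\ket{\gamma_i^{(m)}}$ over whether index $m$ lies in the chosen set $S$ or not, the cross terms vanish after tracing $\mathcal{H}_{\mathrm{B},m}$ (they carry $\bbrakett{V_0}{\Delta}=0$ on slot $m$), and each diagonal term contributes $I_{\mathcal{H}_{\mathrm{A},m}}$ from $V_0^\dagger V_0$ or $\Delta^\dagger\Delta$. Since both contributions are $n$-comb-like and nonnegative combinations preserve the comb property, iterating down from $m=n$ to $m=0$ gives the full sequence $X^{(j)}$, establishing that $\ketbra{\gamma_i}{\gamma_i}$ is an $n$-comb.

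For the second assertion, $\Gamma_i = \E_{U}[U^{\otimes n}\ketbra{\gamma_i}{\gamma_i}U^{\dagger\otimes n}]$ is a convex combination (an average over the Haar measure on $\mathbb{U}_{d_2-d_1}$) of operators $U^{\otimes n}\ketbra{\gamma_i}{\gamma_i}U^{\dagger\otimes n}$, so it suffices to check that each conjugated operator is still an $n$-comb. Here I would use the fact that $U^{\otimes n}$ acts as $(I_{d_1}\oplus U)^{\otimes n}$ on $\bigotimes_i \mathcal{H}_{\mathrm{B},i}$, i.e.\ only on the output slots of the comb; conjugating a comb's Choi operator by a fixed unitary on every output space corresponds to postcomposing each channel-slot with a fixed unitary channel, which preserves the comb structure (it commutes with the partial traces over the $\mathcal{H}_{\mathrm{B},j}$ in \eqref{eq-681546}, since $\tr_{\mathcal{H}_{\mathrm{B},j}}(U_j X U_j^\dagger) = \tr_{\mathcal{H}_{\mathrm{B},j}}(X)$ for unitary $U_j$ on $\mathcal{H}_{\mathrm{B},j}$). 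Then the convex-combination closure of $n$-combs, already noted after Definition \ref{def-681501}, finishes the proof. The main obstacle is bookkeeping: carefully setting up the indexing of $\ket{\gamma_i^{(m)}}$ so the downward recursion closes cleanly and the cross-term cancellations are transparent — but no hard estimate is involved, only the orthogonality $\bbrakett{V_0}{\Delta}=0$ and the isometry identities.
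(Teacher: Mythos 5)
Your proposal is correct and follows essentially the same route as the paper's own proof: an induction peeling off the last slot, in which the cross terms vanish by the orthogonality of $V_0$ and $\Delta$ while the diagonal terms contribute $I_{\mathrm{A}}$ times a combination of the lower-level states $\ketbra{\gamma_i^{n-1}}{\gamma_i^{n-1}}$ and $\ketbra{\gamma_{i-1}^{n-1}}{\gamma_{i-1}^{n-1}}$, followed by closure of combs under averaging for $\Gamma_i$ (your explicit remark that conjugation by $(I_{d_1}\oplus U)^{\otimes n}$ on the output slots commutes with the partial traces is a nice touch the paper leaves implicit). Two small points to tighten: the inherited combination must be \emph{convex}, not merely nonnegative, for the comb property to pass down --- which holds here precisely because $\binom{n-1}{i}+\binom{n-1}{i-1}=\binom{n}{i}$ --- and the cross-term cancellation after tracing only $\mathcal{H}_{\mathrm{B},m}$ uses the operator identity $\tr_{\mathcal{H}_{\mathrm{B}}}(\kettbbra{V_0}{\Delta})=V_0^{\textup{T}}\Delta=0$ rather than the scalar $\bbrakett{V_0}{\Delta}=0$, both consequences of the orthogonal ranges of $V_0$ and $\Delta$.
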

\begin{proof}
We use $\ket{\gamma_i^n}$ to denote the state $\ket{\gamma_i}$ defined in \cref{eq-12270131} with parameter $n$. 
Then, we use induction on $n$ to prove $\ketbra{\gamma_i^n}{\gamma_i^n}$ is an $n$-comb. 
First, we note that 
\begin{equation}\label{eq-1120121}
\begin{gathered}
\tr_{\mathcal{H}_\mathrm{B}}(\kettbbra{V_0}{\Delta})=V_0^{\textup{T}}\Delta=0,\quad\quad \tr_{\mathcal{H}_{\mathrm{B}}}(\kettbbra{\Delta}{V_0})=\Delta^{\textup{T}} V_0=0, \\
\tr_{\mathcal{H}_{\mathrm{B}}}(\kettbbra{V_0}{V_0})=I_\mathrm{A},\quad\quad \tr_{\mathcal{H}_{\mathrm{B}}}(\kettbbra{\Delta}{\Delta})=I_\mathrm{A},
\end{gathered}
\end{equation}
where $I_{\mathrm{A}}$ denotes $\sum_{i=1}^{d_1} \ket{i}_\mathrm{A}\bra{i}_{\mathrm{A}}$.
Then, we note that
\[\ket{\gamma_i^i}=\kett{\Delta}^{\otimes i}.\]
Since $\tr_{\mathcal{H}_\mathrm{B}}(\kettbbra{\Delta}{\Delta})=I_{\mathrm{A}}$, we know that $\ketbra{\gamma_i^i}{\gamma_i^i}$ is an $i$-comb, and the hypothesis holds for the case $n=i$.
On the other hand, note that
\[\ket{\gamma_i^n}=\sqrt{\frac{\binom{n-1}{i}}{\binom{n}{i}}}\kett{V_0}\otimes \ket{\gamma_i^{n-1}}+\sqrt{\frac{\binom{n-1}{i-1}}{\binom{n}{i}}}\kett{\Delta}\otimes\ket{\gamma_{i-1}^{n-1}}.\]
Thus, we have
\begin{align}
\tr_{\mathcal{H}_{\mathrm{B},n}}(\ketbra{\gamma_i^n}{\gamma_i^n})&= \frac{\binom{n-1}{i}}{\binom{n}{i}}\tr_{\mathcal{H}_{\mathrm{B}}}(\kettbbra{V_0}{V_0})\otimes \ketbra{\gamma_{i}^{n-1}}{\gamma_i^{n-1}}+\frac{\binom{n-1}{i-1}}{\binom{n}{i}}\tr_{\mathcal{H}_{\mathrm{B}}}(\kettbbra{\Delta}{\Delta})\otimes \ketbra{\gamma_{i-1}^{n-1}}{\gamma_{i-1}^{n-1}}\nonumber\\
&\,\,\,\,+ \frac{\sqrt{\binom{n-1}{i}\binom{n-1}{i-1}}}{\binom{n}{i}}\Big(\tr_{\mathcal{H}_{\mathrm{B}}}(\kettbbra{V_0}{\Delta})\otimes \ketbra{\gamma_{i}^{n-1}}{\gamma_{i-1}^{n-1}}+\tr_{\mathcal{H}_{\mathrm{B}}}(\kettbbra{\Delta}{V_0})\otimes \ketbra{\gamma_{i-1}^{n-1}}{\gamma_{i}^{n-1}}\Big)\nonumber\\
&=\frac{\binom{n-1}{i}}{\binom{n}{i}}I_\mathrm{A}\otimes \ketbra{\gamma_i^{n-1}}{\gamma_i^{n-1}}+\frac{\binom{n-1}{i-1}}{\binom{n}{i}}I_{\mathrm{A}}\otimes \ketbra{\gamma_{i-1}^{n-1}}{\gamma_{i-1}^{n-1}},\label{eq-1120122}
\end{align}
where \cref{eq-1120122} is due to \cref{eq-1120121}.
By induction hypothesis, both $\ketbra{\gamma_i^{n-1}}{\gamma_i^{n-1}}$ and $\ketbra{\gamma_{i-1}^{n-1}}{\gamma_{i-1}^{n-1}}$ are $(n-1)$-combs. 
Note that $\binom{n-1}{i}+\binom{n-1}{i-1}=\binom{n}{i}$ and thus 
\[\frac{\binom{n-1}{i}}{\binom{n}{i}}\ketbra{\gamma_i^{n-1}}{\gamma_i^{n-1}}+\frac{\binom{n-1}{i-1}}{\binom{n}{i}}\ketbra{\gamma_{i-1}^{n-1}}{\gamma_{i-1}^{n-1}}\]
is an $(n-1)$-comb.
Therefore, $\ketbra{\gamma_i^n}{\gamma_{i}^n}$ is an $n$-comb.

Then, $\Gamma_i$ by definition is a convex combination of $n$-comb, thus $\Gamma_i$ is also an $n$-comb.
\end{proof}

\begin{lemma}\label{lemma-12281107}
Suppose $B=2e^4$, $d_1d_2\geq 2$ and $n\leq \frac{1}{B}d_1d_2/\varepsilon^2$. There exists positive numbers $\lambda_0,\ldots,\lambda_n$ such that \cref{eq-12271213} holds and $\sum_{i=0}^n \lambda_i\leq 3d_1^2d_2^2\exp\!\left(\sqrt{8n\varepsilon^2d_1d_2}\right)$.
\end{lemma}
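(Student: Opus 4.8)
The plan is to reduce \cref{eq-12271213} to a single scalar per block, bound that scalar by representation theory, and then sum.

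First I would use that each $\Gamma_i$ is invariant under conjugation by $U^{\otimes n}$ for every $U\in\mathbb{U}_{d_2-d_1}$ (as in the proof of \cref{lemma-12270148}, via Haar-invariance and $(I_{d_1}\oplus U')(I_{d_1}\oplus U)=I_{d_1}\oplus U'U$). Put $\ket{\phi}:=\sum_{i=0}^n c_i\ket{\gamma_i}$ with $c_i:=(\sqrt{1-\varepsilon^2})^{n-i}\varepsilon^i\sqrt{\binom ni}$, so that \cref{eq-12271523} reads $\kett{V}^{\otimes n}=U^{\otimes n}\ket{\phi}$ for $V\in\mathcal N$. Since $\sum_i\lambda_i\Gamma_i$ is then also $U^{\otimes n}$-invariant, proving \cref{eq-12271213} for all $V\in\mathcal N$ is equivalent to the single operator inequality $\ketbra{\phi}{\phi}\sqsubseteq\sum_i\lambda_i\Gamma_i$. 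As the supports of the $\Gamma_i$ are mutually orthogonal and $\ket{\gamma_i}\in\supp(\Gamma_i)$, writing $\ket{w_i}:=\Pi_i\ket{w}$ for the component of $\ket w$ in $\supp(\Gamma_i)$, one has $\braket{w}{\gamma_i}=\braket{w_i}{\gamma_i}$ and, by the definition of $K_i:=\bra{\gamma_i}\Gamma_i^{+}\ket{\gamma_i}\ge1$ (the least constant with $(\kettbbra{\gamma_i}{\gamma_i})\sqsubseteq K_i\Gamma_i$), $\abs{\braket{w_i}{\gamma_i}}^2\le K_i\bra{w}\Gamma_i\ket{w}$. Combining this with a Cauchy--Schwarz over the $n+1$ blocks (optimal weights $t_i=c_i\sqrt{K_i}$) yields the inequality with $\lambda_i:=\bigl(\sum_j c_j\sqrt{K_j}\bigr)\,c_i\sqrt{K_i}>0$ and $\sum_{i=0}^n\lambda_i=\bigl(\sum_{i=0}^n c_i\sqrt{K_i}\bigr)^2$, so it remains to bound $\sum_i c_i\sqrt{K_i}$.

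The technical heart is the bound $K_i\le\binom{d_1d_2+i-1}{i}$. The linear map sending a vector $x$ on $i$ copies of $\mathcal H_B\otimes\mathcal H_A$ to $\binom ni^{-1/2}\sum_{|S|=i} x_S\otimes\kett{V_0}^{\otimes[n]\setminus S}$ (where $x_S$ is $x$ placed on the copies indexed by $S$) is $\mathbb{U}_{d_2-d_1}$-equivariant and, since the ``which $i$ copies carry $\Delta$'' subspaces are mutually orthogonal, is — up to a positive scalar — an isometry carrying $\kett{\Delta}^{\otimes i}$ to $\ket{\gamma_i}$; hence $K_i=\bbra{\Delta}^{\otimes i}(\Gamma_i^{(0)})^{+}\kett{\Delta}^{\otimes i}$ where $\Gamma_i^{(0)}:=\E_{U\sim\mathbb{U}_{d_2-d_1}}[U^{\otimes i}(\kettbbra{\Delta}{\Delta})^{\otimes i}U^{\dagger\otimes i}]$. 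Decomposing the $\mathbb{U}_{d_2-d_1}$-subrepresentation $W:=\spanspace\{U^{\otimes i}\kett{\Delta}^{\otimes i}:U\}\cong\bigoplus_\mu V_\mu\otimes\mathbb C^{m_\mu}$ (Weyl modules $V_\mu$, multiplicities $m_\mu$), Schur's lemma forces $\Gamma_i^{(0)}=\bigoplus_\mu(\dim V_\mu)^{-1} I_{V_\mu}\otimes\rho_\mu$ with $\rho_\mu$ the reduced operator of $\kett{\Delta}^{\otimes i}$ on the $\mu$-multiplicity space, and a direct computation then gives $K_i=\sum_\mu(\dim V_\mu)\,\rank\rho_\mu\le\sum_\mu(\dim V_\mu)\,m_\mu=\dim W$. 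Finally, $\kett{\Delta}^{\otimes i}$ and every $U^{\otimes i}$ (a tensor power of the single-factor operator $(I_{d_1}\oplus U)\otimes I_{d_1}$) preserve the symmetric subspace of $(\mathcal H_B\otimes\mathcal H_A)^{\otimes i}\cong(\mathbb C^{d_1d_2})^{\otimes i}$, so $W\subseteq\operatorname{Sym}^i(\mathbb C^{d_1d_2})$ and $\dim W\le\binom{d_1d_2+i-1}{i}$.

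To finish, I would use $c_i\le(n\varepsilon^2)^{i/2}/\sqrt{i!}$ and $K_i\le(d_1d_2+i)^i/i!$, so $c_i\sqrt{K_i}\le\bigl(n\varepsilon^2(d_1d_2+i)\bigr)^{i/2}/i!$. For $i\le d_1d_2$ this is at most $(2n\varepsilon^2 d_1d_2)^{i/2}/i!$, and these terms sum to at most $\exp(\sqrt{2n\varepsilon^2 d_1d_2})$. For $i>d_1d_2$, using $i!\ge(i/e)^i$ gives $c_i\sqrt{K_i}\le(2n\varepsilon^2 e^2/i)^{i/2}$, and the hypothesis $n\le d_1d_2/(B\varepsilon^2)$ with $B=2e^4$ forces $2n\varepsilon^2 e^2/i<e^{-2}$ there, so that tail sums to well below $1$. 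Hence $\sum_{i=0}^n c_i\sqrt{K_i}\le2\exp(\sqrt{2n\varepsilon^2 d_1d_2})$ and $\sum_{i=0}^n\lambda_i\le4\exp(\sqrt{8n\varepsilon^2 d_1d_2})\le3d_1^2d_2^2\exp(\sqrt{8n\varepsilon^2 d_1d_2})$, using $d_1d_2\ge2$. The main obstacle is the middle paragraph: getting the representation-theoretic structure of $\Gamma_i^{(0)}$ right — in particular verifying that the symmetrization over subsets $S$ contributes no additional rank and that $W$ lies in the symmetric subspace — so that $K_i$ inherits the crucial $1/i!$-type decay rather than growing like $(d_1d_2)^i$, which would be far too weak against the exponential in the target.
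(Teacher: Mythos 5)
Your proof is correct, and it reaches the same bound by the same core mechanism as the paper — per-block Schur's-lemma dimension counting for $\bra{\gamma_i}\Gamma_i^{-1}\ket{\gamma_i}$ followed by a split at $i\approx d_1d_2$ — but the packaging differs in ways worth noting. Where the paper converts \cref{eq-12271213} into the scalar criterion $\sum_i\lambda_i^{-1}\tr(\Gamma_i^{-1}\kettbbra{V}{V}^{\otimes n})\le 1$ via \cref{fact-5122103} and then picks $\lambda_i$ piecewise ($\lambda_i$ constant for $i<d_1d_2$, $e^{-i}$ beyond), you first quotient out the unitary by invariance, bound each block overlap by $K_i=\bra{\gamma_i}\Gamma_i^{+}\ket{\gamma_i}$, and let a weighted Cauchy--Schwarz choose $\lambda_i\propto c_i\sqrt{K_i}$, which cleanly gives $\sum_i\lambda_i=(\sum_i c_i\sqrt{K_i})^2$; your bound $K_i\le\binom{d_1d_2+i-1}{i}$ is obtained by transferring the twirl to an $i$-copy problem and invoking the symmetric subspace, which is in effect a specialization of the paper's \cref{lemma-12271509} combined with \cref{lemma-12271712} (the paper gets the marginally sharper $\binom{d_1d_2+i-2}{i}$ by working inside $\kett{V_0}^{\perp}$, but this makes no difference downstream); and you replace the entropy/KL estimates (\cref{lemma-12272204}) by elementary $e^{x}$-series and $i!\ge(i/e)^i$ bounds, arriving at $4\exp(\sqrt{8n\varepsilon^2d_1d_2})\le 3d_1^2d_2^2\exp(\sqrt{8n\varepsilon^2d_1d_2})$. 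All the numerics check out. One small imprecision to fix in the write-up: the map $x\mapsto\binom{n}{i}^{-1/2}\sum_{|S|=i}x_S\otimes\kett{V_0}^{\otimes[n]\setminus S}$ is \emph{not} proportional to an isometry on all of $(\mathcal{H}_\mathrm{B}\otimes\mathcal{H}_\mathrm{A})^{\otimes i}$ (cross terms between different $S$ need not vanish for general $x$); it is so only when restricted to $(\mathcal{H}_0^{\perp}\otimes\mathcal{H}_\mathrm{A})^{\otimes i}$, where $\mathcal{H}_0$ is the image of $V_0$ — which is an invariant subspace containing $\kett{\Delta}^{\otimes i}$ and its whole $U^{\otimes i}$-orbit, so the transfer of $K_i$ and the equivariance argument go through verbatim once the domain is restricted.
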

\begin{proof}
From \cref{eq-12271523}, it is easy to see that $\kett{V}^{\otimes n}$ is contained in $\bigoplus_{j=0}^n \supp(\Gamma_j)$. 
Then, by \cref{fact-5122103}, \cref{eq-12271213} is equivalent to 
\begin{equation}\label{eq-12271634}
\sum_{i=0}^n \frac{1}{\lambda_i} \tr\!\left(\Gamma_i^{-1}\kettbbra{V}{V}^{\otimes n}\right)\leq 1.
\end{equation}
Note that the LHS of \cref{eq-12271634} can be upper bounded as
\begin{align}
\sum_{i=0}^n \frac{1}{\lambda_i} \tr\!\left(\Gamma_i^{-1}\kettbbra{V}{V}^{\otimes n}\right)&=\sum_{i=0}^n \frac{1}{\lambda_i}\binom{n}{i}\left(1-\varepsilon^2\right)^{n-i} \varepsilon^{2i} \tr\!\left(\Gamma_i^{-1} U^{\otimes n}\ketbra{\gamma_i}{\gamma_i} U^{\dag\otimes n}\right) \label{eq-12271624} \\
&=\sum_{i=0}^n \frac{1}{\lambda_i}\binom{n}{i}\left(1-\varepsilon^2\right)^{n-i} \varepsilon^{2i} \tr\!\left(\Gamma_i^{-1}\ketbra{\gamma_i}{\gamma_i} \right) \label{eq-12271632}\\
&\leq \sum_{i=0}^n \frac{1}{\lambda_i}\binom{n}{i}\left(1-\varepsilon^2\right)^{n-i} \varepsilon^{2i} \binom{d_1d_2+i-2}{i},\label{eq-12271633}
\end{align}
where \cref{eq-12271624} is by using \cref{eq-12271523} and the fact that $U^{\otimes n}\ket{\gamma_i}$ lies in $\supp(\Gamma_i)$, \cref{eq-12271632} is because $\Gamma_i$ commutes with $U^{\otimes n}$, \cref{eq-12271633} by using \cref{lemma-12271509} where we consider $\ket{\gamma_i}$ as a vector in the linear space:
\[\spanspace\left(\left\{\sum_{\substack{S\subseteq [n]\\ |S|=i }}\ket{\psi}^{\otimes S}\otimes \kett{V_0}^{\otimes [n]\setminus S}\,\, \bigg|\,\, \ket{\psi}\in \kett{V_0}^{\perp} \right\}\right),\]
which has dimension $\binom{d_1d_2+i-2}{i}$ by \cref{lemma-12271712}.
Therefore, it suffices to find positive numbers $\lambda_0,\ldots,\lambda_n$ such that \cref{eq-12271633} is upper bounded by $1$.

Using \cref{lemma-12272204}, we can upper bound \cref{eq-12271633} as
\begin{align}
\eqref{eq-12271633} &\leq \sum_{i=0}^n\frac{1}{\lambda_i} \exp\!\left(-n D\!\left(\frac{i}{n} \,\Big\|\, \varepsilon^2\right) + (d_1d_2+i) H\!\left(\frac{i}{d_1d_2+i}\right)\right) \nonumber\\
&=\sum_{i=0}^n\frac{1}{\lambda_i}\exp\!\left(-i \ln\!\left(\frac{i}{n\varepsilon^2}\right) -(n-i)\ln\!\left(\frac{n-i}{n(1-\varepsilon^2)}\right)+ i \ln \!\left(1+\frac{d_1d_2}{i}\right)+d_1d_2 \ln\!\left(1+\frac{i}{d_1d_2}\right)\right)\nonumber \\
&\leq \sum_{i=0}^n \frac{1}{\lambda_i} \exp\!\left(-i\ln\!\left(\frac{i}{n\varepsilon^2}\right)+i\ln\!\left(1+\frac{d_1d_2}{i}\right)+2i\right),\label{eq-12280023}
\end{align}
where \cref{eq-12280023} is because
\[(n-i)\ln\!\left(\frac{n(1-\varepsilon^2)}{n-i}\right)\leq (n-i)\left(\frac{n(1-\varepsilon^2)}{n-i}-1\right)=i-n\varepsilon^2\leq i,\]
and
\[d_1d_2\ln\!\left(1+\frac{i}{d_1d_2}\right)\leq d_1d_2\frac{i}{d_1d_2}=i.\]
Now, we bound each summand in \cref{eq-12280023} separately:
\begin{itemize}
\item For $i< d_1d_2$, we use
\begin{align}
-i\ln\!\left(\frac{i}{n\varepsilon^2}\right)+i\ln\!\left(1+\frac{d_1d_2}{i}\right)+2i &\leq -i\ln\!\left(\frac{i}{n\varepsilon^2}\right)+i\ln\!\left(\frac{2d_1d_2}{i}\right)+2i \nonumber\\
&=2i\ln\!\left(\frac{\sqrt{2e^2n\varepsilon^2d_1d_2}}{i}\right) \nonumber \\
&\leq \sqrt{8n\varepsilon^2d_1d_2}, \label{eq-12280237}
\end{align}
where \cref{eq-12280237} is due to \cref{lamma-12281043}. 

\item For $i\geq d_1d_2$, then we have $i\geq Bn\varepsilon^2$ since $n\leq \frac{1}{B}d_1d_2/\varepsilon^2$ by assumption, and
\begin{align}
-i\ln\!\left(\frac{i}{n\varepsilon^2}\right)+i\ln\!\left(1+\frac{d_1d_2}{i}\right)+2i&\leq -i \ln(B)+i\ln(2)+2i=-i\ln\!\left(B/2e^2\right)=-2i.\nonumber
\end{align}
\end{itemize}
Therefore, taking $\lambda_i=2d_1d_2\exp\!\left(\sqrt{8n\varepsilon^2d_1d_2}\right)$ for $i<d_1d_2$ and $\lambda_i=\exp(-i)$ for $i\geq d_1d_2$, we can see \cref{eq-12280023} is upper bounded by
\begin{align}
\eqref{eq-12280023} &\leq \sum_{i<d_1d_2} \frac{1}{\lambda_i} \exp\!\left(\sqrt{8n\varepsilon^2d_1d_2}\right)+\sum_{i\geq d_1d_2} \frac{1}{\lambda_i}\exp(-2i) \nonumber \\
&\leq \frac{1}{2}+\sum_{i\geq d_1d_2} \exp(-i)\nonumber \\
&\leq\frac{1}{2}+ \exp(-d_1d_2) \frac{e}{e-1}\nonumber \\
&< 1,\nonumber
\end{align}
where in the last inequality we use that $d_1d_2\geq 2$,
and we also have
\begin{align}
\sum_{i=0}^n\lambda_i &\leq 2d_1^2d_2^2\exp\!\left(\sqrt{8n\varepsilon^2d_1d_2}\right)+\exp(-d_1d_2)\frac{e}{e-1} \nonumber \\
& < 2d_1^2d_2^2\exp\!\left(\sqrt{8n\varepsilon^2d_1d_2}\right)+\frac{1}{2}\nonumber\\
& < 3d_1^2d_2^2\exp\!\left(\sqrt{8n\varepsilon^2d_1d_2}\right),\nonumber
\end{align}
as desired.
\end{proof}

\begin{lemma}\label{lemma-12271509}
Let $G$ be a compact Lie group equipped with a unitary action $\rho(\cdot)$ on a finite-dimensional Hilbert space $\mathcal{H}$.
Let $X\in \mathcal{L}(\mathcal{H})$ be a positive semidefinite operator. Then, we have
\[\tr\left(\left(\E_{g\sim G}\left[\rho(g) X\rho(g)^{-1}\right]\right)^{-1} X\right)\leq \dim(\mathcal{H}),\]
where $(\cdot)^{-1}$ denotes the pseudo-inverse and $\E_{g\sim G}$ is the expectation over the Haar measure of $G$.
\end{lemma}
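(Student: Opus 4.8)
The plan is to exploit the invariance of the Haar average together with the cyclicity of the trace; in fact the argument gives the slightly sharper bound $\rank(\bar X)$ in place of $\dim(\mathcal H)$. Write $\bar X \coloneqq \E_{g\sim G}\bracks*{\rho(g) X \rho(g)^{-1}}$ for the twirl of $X$. First I would record the two basic properties of $\bar X$ that the argument needs: (i) $\bar X \sqsupseteq 0$, being a Haar average of the positive semidefinite operators $\rho(g) X \rho(g)^{-1}$; and (ii) $\bar X$ is $G$-invariant, i.e.\ $\rho(h)\bar X \rho(h)^{-1} = \bar X$ for every $h \in G$, by left-invariance of the Haar measure. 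Property (ii) passes to the pseudo-inverse, so $\rho(h) \bar X^{-1} \rho(h)^{-1} = \bar X^{-1}$ for all $h$. I would also note that $\supp(X) \subseteq \supp(\bar X)$ (since the identity lies in $G$), so that $\bar X^{-1}\bar X$ is the orthogonal projector onto $\supp(\bar X)$ and the quantity $\tr(\bar X^{-1}X)$ is well-defined.

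The key step is the identity $\tr(\bar X^{-1} X) = \tr\bracks*{\bar X^{-1} \rho(g) X \rho(g)^{-1}}$ for every $g \in G$. Indeed, by cyclicity of the trace, $\tr(\bar X^{-1} X) = \tr\bracks*{\rho(g)\bar X^{-1}X\rho(g)^{-1}} = \tr\bracks[\big]{(\rho(g)\bar X^{-1}\rho(g)^{-1})(\rho(g) X\rho(g)^{-1})}$, and the first factor equals $\bar X^{-1}$ by property (ii). Since the left-hand side $\tr(\bar X^{-1}X)$ does not depend on $g$, averaging this identity over $g \sim G$ and pulling the Haar expectation inside the linear map $\tr(\bar X^{-1}\,\cdot\,)$ gives $\tr(\bar X^{-1} X) = \tr(\bar X^{-1}\bar X) = \rank(\bar X) \leq \dim(\mathcal H)$, which is the claim.

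There is no serious obstacle here; the only point requiring a little care is the handling of the pseudo-inverse, namely checking that $\supp(X)\subseteq\supp(\bar X)$ and that $G$-invariance is inherited by $\bar X^{-1}$. The latter holds because each $\rho(h)$ preserves both $\supp(\bar X)$ and $\ker(\bar X)$ (being an isometry commuting with $\bar X$) and hence commutes with the pseudo-inverse, which acts as $\bar X$ does on $\supp(\bar X)$ and as zero on $\ker(\bar X)$. Everything else is Haar-averaging and trace cyclicity.
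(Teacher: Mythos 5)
Your proof is correct, but it takes a genuinely different route from the paper. The paper decomposes $\mathcal{H}$ into isotypic components $\bigoplus_i \mathcal{V}_i\otimes\mathcal{W}_i$, applies Schur's lemma to write the twirl explicitly as $\bigoplus_i \frac{1}{\dim(\mathcal{V}_i)} I_{\mathcal{V}_i}\otimes \tr_{\mathcal{V}_i}(X_{i\to i})$, and then bounds each block's contribution by $\dim(\mathcal{V}_i)\dim(\mathcal{W}_i)$. You avoid representation theory entirely: writing $\bar X=\E_{g}[\rho(g)X\rho(g)^{-1}]$, you use only left-invariance of the Haar measure (so $\rho(h)$ commutes with $\bar X$, hence with its pseudo-inverse), cyclicity of the trace, and an averaging step to get the exact identity $\tr(\bar X^{-1}X)=\tr(\bar X^{-1}\bar X)=\rank(\bar X)\leq\dim(\mathcal{H})$. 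This is more elementary and in fact slightly sharper (it gives $\rank(\bar X)$, and shows the paper's inequality is an equality up to the rank--dimension gap), whereas the paper's Schur-lemma route has the side benefit of exhibiting the explicit block structure of the twirled operator, which is the same structural fact invoked nearby (e.g.\ the commutation of $\Gamma_i$ with $U^{\otimes n}$ in the proof of \cref{lemma-12281107}). One small point of care in your write-up: the claim $\supp(X)\subseteq\supp(\bar X)$ is justified by ``the identity lies in $G$,'' but since a single group element has Haar measure zero you should argue via continuity (the nonnegative continuous function $g\mapsto\langle v,\rho(g)X\rho(g)^{-1}v\rangle$ with zero average vanishes identically, in particular at $g=e$); as you note, this fact is anyway not needed for the inequality itself, since your chain of equalities holds regardless.
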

\begin{proof}
Since $\mathcal{H}$ is a unitary representation of $G$, it is completely reducible. This means we can write: 
\[\mathcal{H}\stackrel{G}{\cong} \bigoplus_{i} \mathcal{V}_i \otimes \mathcal{W}_i,\]
where these $\mathcal{V}_i$ are pairwise non-isomorphic irreducible representations of $G$ and $\mathcal{W}_i$ are corresponding multiplicity spaces. 
We can write $X=\bigoplus_{i,j} X_{i\rightarrow j}$ where $X_{i\rightarrow j}: \mathcal{V}_i\otimes\mathcal{W}_i\rightarrow \mathcal{V}_j\otimes\mathcal{W}_j$ is a linear operator.
Then, by Schur's lemma, we have
\begin{align}
\E_{g\sim G}\left[\rho(g) X\rho(g)^{-1}\right]=\bigoplus_{i} \frac{1}{\dim(\mathcal{V}_i)}I_{\mathcal{V}_i} \otimes \tr_{\mathcal{V}_i}(X_{i\rightarrow i}).\nonumber 
\end{align}
Therefore, 
\begin{align}
\tr\left(\left(\E_{g\sim G}\left[\rho(g) X\rho(g)^{-1}\right]\right)^{-1} X\right)&=\sum_i \dim(\mathcal{V}_i) \tr\Big(\left(I_{\mathcal{V}_i}\otimes \tr_{\mathcal{V}_i}(X_{i\rightarrow i})^{-1}\right) \cdot X_{i\rightarrow i}\Big)\nonumber\\
&=\sum_i \dim(\mathcal{V}_i) \tr\left(\tr_{\mathcal{V}_i}(X_{i\rightarrow i})^{-1}\cdot \tr_{\mathcal{V}_i}(X_{i\rightarrow i})\right)\nonumber\\
&\leq \sum_i \dim(\mathcal{V}_i) \dim(\mathcal{W}_i)\nonumber\\
&=\dim(\mathcal{H}).\nonumber
\end{align}
\end{proof}

\begin{lemma}\label{lemma-12271712}
Consider the linear space $\mathbb{C}^{d+1}$ with the orthonormal basis $\{\ket{0},\ket{1},\ldots,\ket{d}\}$. Let $n\geq m$ be two positive integers and $\mathcal{H}_i\cong \mathbb{C}^{d+1}$ for $i\in[n]$. Consider the following subspace of $\bigotimes_{i=1}^n \mathcal{H}_i$:
\[A=\spanspace\left(\left\{\sum_{\substack{S\subseteq [n]\\ |S|=m }}\ket{\psi}^{\otimes S}\otimes \ket{0}^{\otimes [n]\setminus S}\,\, \bigg|\,\, \ket{\psi}\in \ket{0}^{\perp} \right\}\right),\]
where $\ket{0}^{\perp}$ denotes the subspace orthogonal to $\ket{0}$ (i.e., the subspace spanned by $\{\ket{1},\ldots,\ket{d}\}$).
Then, we have 
\[\dim(A)=\binom{d+m-1}{m}.\]
\end{lemma}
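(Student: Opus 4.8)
Abbreviate $V\coloneqq\ket{0}^{\perp}=\spanspace\{\ket{1},\ldots,\ket{d}\}\cong\mathbb{C}^d$, and for $\ket{\psi}\in V$ set $g_{\psi}\coloneqq\sum_{S\subseteq[n],\,\abs{S}=m}\ket{\psi}^{\otimes S}\otimes\ket{0}^{\otimes[n]\setminus S}$, so that $A=\spanspace\{g_{\psi}:\ket{\psi}\in V\}$. The plan is to construct an \emph{injective} linear map $\Psi\colon V^{\otimes m}\to\bigotimes_{i=1}^{n}\mathcal{H}_i$ with $\Psi(\ket{\psi}^{\otimes m})=g_{\psi}$ for every $\ket{\psi}\in V$. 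Granting this, $A=\Psi\bigl(\spanspace\{\ket{\psi}^{\otimes m}:\ket{\psi}\in V\}\bigr)$, and since $\Psi$ is injective, $\dim A=\dim\spanspace\{\ket{\psi}^{\otimes m}:\ket{\psi}\in V\}$. By the standard polarization fact this span equals the symmetric subspace $\mathrm{Sym}^m(V)\subseteq V^{\otimes m}$ (see, e.g., \cite{watrous2018theory}), whose dimension is the number of size-$m$ multisets over a $d$-element set, i.e.\ $\binom{d+m-1}{m}$; this finishes the proof.

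\textbf{Construction of $\Psi$.} Write each factor as $\mathcal{H}_i=\mathbb{C}\ket{0}\oplus V$ and distribute the tensor product to obtain the orthogonal decomposition $\bigotimes_{i=1}^{n}\mathcal{H}_i=\bigoplus_{S\subseteq[n]}\mathcal{H}_S$, where $\mathcal{H}_S$ is the subspace spanned by the product vectors $\bigotimes_{i=1}^{n}\ket{w_i}$ with $\ket{w_i}\in V$ for $i\in S$ and $\ket{w_i}=\ket{0}$ for $i\notin S$. For each $S$ with $\abs{S}=m$, listing the elements of $S$ in increasing order yields a canonical isometry $\theta_S\colon V^{\otimes m}\to\mathcal{H}_S$ with $\theta_S(\ket{\psi}^{\otimes m})=\ket{\psi}^{\otimes S}\otimes\ket{0}^{\otimes[n]\setminus S}$. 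Define
\[
\Psi(w)\coloneqq\sum_{\substack{S\subseteq[n]\\\abs{S}=m}}\theta_S(w),\qquad w\in V^{\otimes m}.
\]
Then $\Psi$ is linear and $\Psi(\ket{\psi}^{\otimes m})=\sum_{\abs{S}=m}\ket{\psi}^{\otimes S}\otimes\ket{0}^{\otimes[n]\setminus S}=g_{\psi}$, as required; hence $A=\Psi\bigl(\spanspace\{\ket{\psi}^{\otimes m}:\ket{\psi}\in V\}\bigr)$.

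\textbf{Injectivity of $\Psi$.} Let $S_0=\{1,\ldots,m\}$ (using $n\ge m$) and let $P=\bigotimes_{i=1}^{n}P_i$, where $P_i$ is the orthogonal projection onto $V$ for $i\le m$ and onto $\mathbb{C}\ket{0}$ for $i>m$; thus $P$ is the orthogonal projection onto $\mathcal{H}_{S_0}$. If $\abs{S}=m$ and $S\ne S_0$, then $S_0\setminus S\ne\emptyset$, so fix $i\in S_0\setminus S$; on every spanning product vector of $\mathcal{H}_S$ the $i$-th factor is $\ket{0}$, which $P_i$ annihilates because $i\le m$ and $\ket{0}\perp V$, so $P$ vanishes on $\mathcal{H}_S$. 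Therefore $P\circ\Psi=\theta_{S_0}$, which is injective (an isometry onto its image), and hence $\Psi$ is injective. This gives $\dim A=\dim\spanspace\{\ket{\psi}^{\otimes m}:\ket{\psi}\in V\}=\dim\mathrm{Sym}^m(V)=\binom{d+m-1}{m}$.

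\textbf{Main difficulty.} There is no real obstacle; the argument is bookkeeping with the block decomposition, the one substantive point being the injectivity of $\Psi$ (without it one would only get an upper bound on $\dim A$). The hypothesis $\ket{\psi}\in\ket{0}^{\perp}$ is used precisely to guarantee that the generating vectors $\ket{\psi}^{\otimes m}$ lie in $V^{\otimes m}$ (so that $\Psi$ and $\theta_S$ apply) and that their span is $\mathrm{Sym}^m(V)$ with $\dim V=d$; the only external input, namely that $\{\ket{\psi}^{\otimes m}:\ket{\psi}\in\mathbb{C}^d\}$ spans $\mathrm{Sym}^m(\mathbb{C}^d)$ of dimension $\binom{d+m-1}{m}$, is classical.
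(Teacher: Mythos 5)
Your proof is correct and takes essentially the same route as the paper: both realize $A$ as the injective image of (a copy of) the symmetric subspace $\lor^m\mathbb{C}^d$, and your map $\Psi=\sum_{\abs{S}=m}\theta_S$ agrees, up to the constant $m!(n-m)!$, with the paper's symmetrizer $\sum_{\pi\in\mathfrak{S}_n}\texttt{p}(\pi)$ restricted to $\spanspace\{\ket{\psi}^{\otimes m}\otimes\ket{0}^{\otimes n-m}\}$. The only substantive difference is that you verify injectivity explicitly, via the orthogonal projection onto the block $\mathcal{H}_{S_0}$, whereas the paper leaves this as ``one can easily check.''
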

\begin{proof}
Consider the linear operator
\[P=\sum_{\pi\in\mathfrak{S}_{n}} \texttt{p}(\pi),\]
where $\texttt{p}(\cdot)$ denotes the tensor permutation action of $\mathfrak{S}_n$ on $\bigotimes_{i=1}^n\mathcal{H}_i$, i.e., $\texttt{p}(\pi)\ket{\psi_1}\otimes\cdots\otimes\ket{\psi_{n}}=\ket{\psi_{\pi^{-1}(1)}}\otimes\cdots\otimes\ket{\psi_{\pi^{-1}(n)}}$.
One can easily check that $P$ is injective when restricting on the subspace $\spanspace(\{\ket{\psi}^{\otimes m}\otimes \ket{0}^{\otimes n-m} \,\, | \,\, \ket{\psi}\in\ket{0}^\perp\})$ and $A$ is exactly the image of $P$ on this subspace. 
Furthermore, we know that $\spanspace(\{\ket{\psi}^{\otimes m}\,\,|\,\, \ket{\psi}\in\ket{0}^\perp\})\cong \lor^m \mathbb{C}^d$ is the symmetric subspace of $(\mathbb{C}^d)^{\otimes m}$, and has dimension $\binom{d+m-1}{m}$~\cite{harrow2013church}, and thus $A$ has the same dimension.
\end{proof}

\section{Instantiation}\label{sec-1121759}
In \cref{sec-1110117}, we showed that for a sufficiently large set $\mathcal{N}$ of isometries with specific structures, the discrimination problem for $\mathcal{N}$ is hard. 
In this section, we use the construction of the $\varepsilon$-net provided in \cite{oufkir2026improved} as an instantiation of $\mathcal{N}$.
This, combined with our \cref{thm-1110122}, provides the lower bound for quantum channel tomography.

Suppose $r,d_1,d_2$ are positive integers such that $rd_2\geq d_1$ and $\varepsilon\in(0,1)$.
Define the Hilbert spaces $\mathcal{H}_{\mathrm{A}}\cong \mathbb{C}^{d_1}$ and $\mathcal{H}_{\mathrm{B}}\cong \mathbb{C}^{d_2}$, and $\mathcal{H}_{\mathrm{anc}}\cong \mathbb{C}^r$.
Define the isometries $\Delta:\mathcal{H}_{\mathrm{A}}\rightarrow \mathcal{H}_{\mathrm{B}}\otimes\mathcal{H}_{\mathrm{anc}}$ as
\begin{align}
\Delta \coloneqq \sum_{i=1}^{d_1} \ket{i}_{\mathrm{B},\mathrm{anc}}\bra{i}_{\mathrm{A}}. \nonumber
\end{align}
Then, for $U\in\mathbb{U}_{rd_2}$, define the isometries $V_{\varepsilon,U}:\mathcal{H}_{\mathrm{A}}\rightarrow\mathbb{C}^2\otimes \mathcal{H}_{\mathrm{B}}\otimes\mathcal{H}_{\mathrm{anc}}$ as
\begin{align}
V_{\varepsilon,U}&\coloneqq \sqrt{1-\varepsilon^2} \ket{0}\otimes V_0 +\varepsilon \ket{1}\otimes  (U\Delta), \label{eq-1110127}
\end{align}
where $V_0=\sum_{i=1}^r\ket{i}_{\mathrm{anc}}\otimes K_i:\mathcal{H}_{\mathrm{A}}\rightarrow\mathcal{H}_{\mathrm{B}}\otimes\mathcal{H}_{\mathrm{anc}}$ is an isometry such that
\[\left|\tr\!\left(K_i^\dag K_j\right)\right|\leq \frac{2d_1}{r}\cdot \mathbbm{1}_{i=j},\quad\forall i,j\in [r].\]
Then, define the quantum channels $\mathcal{E}_{\varepsilon,U}:\mathcal{L}(\mathcal{H}_{\mathrm{A}})\rightarrow\mathcal{L}(\mathbb{C}^2\otimes \mathcal{H}_{\mathrm{B}})$ as
\begin{equation}\label{eq-1110128}
\mathcal{E}_{\varepsilon,U}(\cdot)\coloneqq \tr_{\mathcal{H}_{\mathrm{anc}}}\!\left(V_{\varepsilon,U}(\cdot) V_{\varepsilon,U}^{\dag}\right).
\end{equation}
The following result is adapted from \cite{oufkir2026improved}.
\begin{lemma}\label{thm-1101611}
Let $d_1,d_2,r$ be positive integers such that $d_1/d_2\leq r \leq d_1d_2$, and $\varepsilon\in (0,10^{-4})$.
There exists a subset $\mathcal{M}\subseteq \mathbb{U}_{rd_2}$ with cardinality $|\mathcal{M}|\geq \exp(rd_1d_2/1201)$ such that for any $U_1,U_2\in\mathcal{M}$ and $U_1\neq U_2$, we have
\[\|\mathcal{E}_{\varepsilon,U_1}-\mathcal{E}_{\varepsilon,U_2}\|_\diamond\geq 0.07\varepsilon,\]
where $\mathcal{E}_{\varepsilon,U}\in\qchannel_{d_1,2d_2}^r$ is defined in \cref{eq-1110128}.
For convenience, we will denote the set of isometries $\{V_{\varepsilon,U} \,|\, U\in\mathcal{M}\}\subseteq \isochannel_{d_1,2rd_2}$ as $\mathcal{N}$. 
\end{lemma}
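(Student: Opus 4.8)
The plan is to follow the construction of \cite{oufkir2026improved} essentially verbatim and re-derive from it the two quantitative facts we need: the cardinality bound $|\mathcal{M}|\ge\exp(rd_1d_2/1201)$ and the diamond-norm separation $\|\mathcal{E}_{\varepsilon,U_1}-\mathcal{E}_{\varepsilon,U_2}\|_\diamond\ge 0.07\varepsilon$. There are two ingredients. First, a \emph{base isometry}: one fixes $V_0=\sum_{i=1}^r K_i\otimes\ket{i}_{\mathrm{anc}}$ whose Kraus operators are near-orthogonal with balanced Frobenius norms, $\tr(K_i^\dagger K_j)\approx\frac{d_1}{r}\mathbbm{1}_{i=j}$; such a $V_0$ exists by a short probabilistic argument (take suitably rescaled blocks of a Haar-random isometry and apply a concentration bound), yielding in particular $|\tr(K_i^\dagger K_j)|\le\frac{2d_1}{r}\mathbbm{1}_{i=j}$. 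Second, a \emph{net}: the only free parameter in $\mathcal{E}_{\varepsilon,U}$ is the isometry $U\Delta\colon\mathcal{H}_{\mathrm{A}}\to\mathcal{H}_{\mathrm{B}}\otimes\mathcal{H}_{\mathrm{anc}}$, which ranges over the Stiefel manifold of real dimension $2d_1rd_2-d_1^2\ge rd_1d_2$ (using $rd_2\ge d_1$); a standard greedy/volumetric packing argument on this manifold produces $\mathcal{M}\subseteq\mathbb{U}_{rd_2}$ with $\log|\mathcal{M}|=\Omega(rd_1d_2)$ such that the isometries $\{U\Delta:U\in\mathcal{M}\}$ are pairwise separated by a fixed constant in the normalized Hilbert--Schmidt metric; tracking the packing radius used in \cite{oufkir2026improved} through these estimates gives the explicit constant $1/1201$. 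One also checks the structural bookkeeping: $\mathcal{E}_{\varepsilon,U}$ arises from $V_{\varepsilon,U}$ by tracing out the $r$-dimensional ancilla, so $\mathcal{E}_{\varepsilon,U}\in\qchannel_{d_1,2d_2}^r$; and $V_{\varepsilon,U}\in\isochannel_{d_1,2rd_2}$, with the flag qubit ensuring its $V_0$-branch and $U\Delta$-branch land in orthogonal subspaces, so $\{V_{\varepsilon,U}:U\in\mathcal{M}\}$ is a ``hard'' isometry set in the sense of \Cref{def-1130036}.

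For the separation, fix $U_1\neq U_2$ in $\mathcal{M}$ and feed the channels the $\mathcal{H}_{\mathrm{A}}$-half of a maximally entangled state on $\mathcal{H}_{\mathrm{A}}\otimes\mathcal{H}_{\mathrm{A}'}$; then $\|\mathcal{E}_{\varepsilon,U_1}-\mathcal{E}_{\varepsilon,U_2}\|_\diamond\ge\frac1{d_1}\|C_{\mathcal{E}_{\varepsilon,U_1}}-C_{\mathcal{E}_{\varepsilon,U_2}}\|_1$. With respect to the flag qubit, the unnormalized Choi operator decomposes as
\[
C_{\mathcal{E}_{\varepsilon,U}}=(1-\varepsilon^2)\,\ketbra00\otimes J_0+\varepsilon^2\,\ketbra11\otimes\sigma_U+\sqrt{1-\varepsilon^2}\,\varepsilon\bigl(\ketbra01\otimes\tau_U+\ketbra10\otimes\tau_U^\dagger\bigr),
\]
where $J_0$ (the flag-$0$ block) is independent of $U$ and cancels in the difference, $\sigma_U=\tr_{\mathrm{anc}}\kettbbra{U\Delta}{U\Delta}$ has trace $d_1$, and $\tau_U=\tr_{\mathrm{anc}}\kettbbra{V_0}{U\Delta}$. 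Applying the triangle inequality to this $2\times2$ block structure gives $\|C_{\mathcal{E}_{\varepsilon,U_1}}-C_{\mathcal{E}_{\varepsilon,U_2}}\|_1\ge 2\sqrt{1-\varepsilon^2}\,\varepsilon\,\|\tau_{U_1}-\tau_{U_2}\|_1-2\varepsilon^2 d_1$, so it remains to show $\|\tau_{U_1}-\tau_{U_2}\|_1\ge c\,d_1$ for a constant $c>0$. Writing $\tau_{U_1}-\tau_{U_2}=\sum_{k=1}^r\ketbra{v_k}{y_k}$, where $\ket{v_k}=\kett{K_k}$ are pairwise orthogonal with $\|v_k\|^2=\tr(K_k^\dagger K_k)\approx d_1/r$ and $\ket{y_k}$ is the $k$-th ancilla slice of $\kett{(U_1-U_2)\Delta}$ with $\sum_k\|y_k\|^2=\|(U_1-U_2)\Delta\|_F^2\gtrsim d_1$ (the last bound because $U_1\Delta,U_2\Delta$ are net points), orthogonality of the $\ket{v_k}$ gives $\|\tau_{U_1}-\tau_{U_2}\|_F^2=\sum_k\|v_k\|^2\|y_k\|^2\approx\frac{d_1}{r}\|(U_1-U_2)\Delta\|_F^2$ and $\|\tau_{U_1}-\tau_{U_2}\|_{\mathrm{op}}^2\approx\frac{d_1}{r}\,\bigl\|\tr_{\mathrm{anc}}\kettbbra{(U_1-U_2)\Delta}{(U_1-U_2)\Delta}\bigr\|_{\mathrm{op}}$. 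Combining with $\|X\|_1\ge\|X\|_F^2/\|X\|_{\mathrm{op}}$ and the near-flatness of the ancilla marginal $\tr_{\mathrm{anc}}\kettbbra{(U_1-U_2)\Delta}{(U_1-U_2)\Delta}$ (whose operator norm is $O(\|(U_1-U_2)\Delta\|_F^2/r)$) yields $\|\tau_{U_1}-\tau_{U_2}\|_1=\Omega(d_1)$. Plugging back, $\|\mathcal{E}_{\varepsilon,U_1}-\mathcal{E}_{\varepsilon,U_2}\|_\diamond\ge\varepsilon\bigl(2c\sqrt{1-\varepsilon^2}-2\varepsilon\bigr)\ge 0.07\,\varepsilon$ once $\varepsilon<10^{-4}$, with the final numerical constants coming from bookkeeping the estimates of \cite{oufkir2026improved}.

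The step I expect to be the main obstacle is the partial trace over the ancilla in the separation argument. A priori, $\tr_{\mathrm{anc}}$ can collapse the $\Theta(\varepsilon)$ coherence term $\tau_{U_1}-\tau_{U_2}$ and leave only $\Theta(\varepsilon/\sqrt{r})$ of signal, which would be too weak to match the upper bound. Recovering the full $\Theta(\varepsilon)$ separation requires a delicate interplay between the two ingredients above: the Kraus operators of $V_0$ must be \emph{simultaneously} near-orthogonal and norm-balanced, and the net isometries $U\Delta$ must have ancilla marginals with nearly flat spectrum so that $\tr_{\mathrm{anc}}\kettbbra{(U_1-U_2)\Delta}{(U_1-U_2)\Delta}$ is not spiky. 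Guaranteeing both of these while keeping $|\mathcal{M}|$ exponentially large, and with constants clean enough to state $\exp(rd_1d_2/1201)$ and $0.07\varepsilon$, is precisely what the construction of \cite{oufkir2026improved} accomplishes, and this is the content we invoke.
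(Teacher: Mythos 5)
Your high-level reduction (flag-qubit block decomposition of the Choi operators, cancelling the $U$-independent block, triangle inequality giving $\|C_{\mathcal{E}_{\varepsilon,U_1}}-C_{\mathcal{E}_{\varepsilon,U_2}}\|_1\ge 2\varepsilon\sqrt{1-\varepsilon^2}\,\|\tau_{U_1}-\tau_{U_2}\|_1-2\varepsilon^2 d_1$ with $\tau_U=\tr_{\mathrm{anc}}\kettbbra{V_0}{U\Delta}$) does match how this separation is actually established (see the analogous computation in \cref{appdix-1140304}, which follows \cite{oufkir2026improved}). The genuine gap is in how you then obtain $\|\tau_{U_1}-\tau_{U_2}\|_1=\Omega(d_1)$ and how you build $\mathcal{M}$. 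You propose a volumetric/greedy packing of the Stiefel manifold in Hilbert--Schmidt distance, and then the chain $\|X\|_1\ge\|X\|_F^2/\|X\|_{\mathrm{op}}$, which requires two extra structural properties: (i) the Frobenius norms $\tr(K_i^\dagger K_i)$ must be lower-bounded (norm-balanced) \emph{on the ancilla indices where $(U_1-U_2)\Delta$ carries its mass}, whereas the guarantee available is only the upper bound $|\tr(K_i^\dagger K_j)|\le\frac{2d_1}{r}\mathbbm{1}_{i=j}$ together with $\sum_i\tr(K_i^\dagger K_i)=d_1$; and (ii) near-flatness of the ancilla marginal of $\kettbbra{(U_1-U_2)\Delta}{(U_1-U_2)\Delta}$ for \emph{every} pair in the net. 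Neither property follows from Hilbert--Schmidt separation: a packing point pair can put essentially all of the ancilla mass of $U_1-U_2$ on indices where $\|K_i\|_F$ is small (consistent with the stated Kraus bound), in which case $\tau_{U_1}-\tau_{U_2}$ can be tiny even though $\|(U_1-U_2)\Delta\|_F=\Omega(\sqrt{d_1})$, and your inequality chain collapses. You flag exactly this as ``the main obstacle'' and defer it to \cite{oufkir2026improved}, but that reference does not supply a packing with these side conditions; so as written the key step is not proved and is not recoverable from what you invoke.

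What the source (and the paper's own \cref{appdix-1140304}, which is the only place a proof is spelled out, for the odd-dimension variant) actually does is probabilistic rather than metric-entropy based: define $F(U_x,U_y)=\frac1{d_1}\tr_{\mathrm{anc}}\bigl(\kett{V_0}(\bbra{U_x\Delta}-\bbra{U_y\Delta})\bigr)$, compute $\E\tr(|F|^2)$ exactly and upper bound $\E\tr(|F|^4)$ over Haar-random $U_x,U_y$ (this is where only the \emph{upper} bound on $|\tr(K_i^\dagger K_j)|$ is needed), use H\"older to get $\E\tr(|F|)=\Omega(1)$, then a Lipschitz/L\'evy concentration bound on $\mathbb{U}_{rd_2}$ and a union bound over $\exp(rd_1d_2/\mathrm{const})$ independently sampled Haar unitaries. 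This is also where the cardinality constant ($1/1201$ here, $1/100001$ in the appendix) comes from --- concentration plus union bound, not a packing-radius count on the Stiefel manifold, so your constant-tracking claim is misattributed as well. To repair your write-up you would either have to carry out this moment-plus-concentration argument yourself, or state the lemma as imported verbatim from \cite{oufkir2026improved} (as the paper does) rather than re-deriving it through a packing argument that does not deliver the needed spectral conditions.
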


Then, we can prove the lower bound for quantum channel tomography.
\begin{theorem}\label{thm-1110326}
Let $d_1,d_2,r$ be positive integers such that $2d_1/d_2\leq r\leq d_1d_2/2$.
Suppose $\mathcal{E}\in\qchannel_{d_1,d_2}^r$ is an unknown quantum channel.
Any algorithm that can output an estimate for $\mathcal{E}$ to within diamond norm error $\varepsilon$ with high probability must use at least $n=\Omega(rd_1d_2/\varepsilon^2)$ queries to $\mathcal{E}$.
\end{theorem}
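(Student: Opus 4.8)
The plan is to reduce the tomography task to the discrimination problem for the ``hard'' isometry set $\mathcal{N}$ constructed in \cref{thm-1101611}, and then invoke \cref{thm-1110122}. First I would observe that the parameter constraints $2d_1/d_2\le r\le d_1d_2/2$ are precisely what is needed so that \cref{thm-1101611} applies with output dimension $d_2$ replaced by $d_2/2$ (or, symmetrically, the roles rescaled): we want to produce channels in $\qchannel_{d_1,d_2}^r$ from the construction, which outputs channels in $\qchannel_{d_1,2d_2'}^r$ for the internal dimension $d_2'$, so setting $d_2'=d_2/2$ and checking $d_1/d_2'\le r\le d_1d_2'$ is the bookkeeping. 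If $\varepsilon$ is not smaller than the threshold $10^{-4}$ demanded by \cref{thm-1101611}, we first replace $\varepsilon$ by a small constant multiple of it; this only changes the lower bound by a constant factor, which is absorbed into the $\Omega(\cdot)$.

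Next I would set up the reduction. Suppose an algorithm performs tomography of channels in $\qchannel_{d_1,d_2}^r$ to diamond-norm error $0.03\varepsilon$ (say) with success probability $\ge 2/3$, using $n$ queries. Feed it an unknown channel drawn uniformly from $\{\mathcal{E}_{\varepsilon,U}\}_{U\in\mathcal{M}}$. Because \cref{thm-1101611} guarantees the pairwise diamond-norm separation $\|\mathcal{E}_{\varepsilon,U_1}-\mathcal{E}_{\varepsilon,U_2}\|_\diamond\ge 0.07\varepsilon$, an estimate within error $0.03\varepsilon$ uniquely determines which $U\in\mathcal{M}$ was used (by rounding to the nearest channel in the finite family), so the tomography algorithm yields a discrimination algorithm for $\mathcal{N}$ — wait, more precisely for the channels $\{\mathcal{E}_{\varepsilon,U}\}$; but each $\mathcal{E}_{\varepsilon,U}$ is obtained from the isometry $V_{\varepsilon,U}\in\mathcal{N}$ by tracing out $\mathcal{H}_{\mathrm{anc}}$, and tracing out is a fixed post-processing superchannel, so any tester for discriminating $\{\mathcal{E}_{\varepsilon,U}\}$ with $n$ queries induces a tester for discriminating $\{V_{\varepsilon,U}\}$ with $n$ queries (one simply appends the partial trace after each query). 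Hence discriminating $\mathcal{N}$ with $n$ queries is no harder than tomography, up to constants.

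Then I would verify the hypotheses of \cref{thm-1110122} for $\mathcal{N}$: it must be a ``hard'' isometry set in the sense of \cref{def-1130036} (with the parameter $\varepsilon$ and input/output dimensions $d_1$ and $2rd_2'$), and have cardinality $\ge\exp(Cd_1\cdot 2rd_2')$. The cardinality bound $|\mathcal{M}|\ge\exp(rd_1d_2'/1201)$ from \cref{thm-1101611} gives this provided the universal constant $C$ in \cref{thm-1110122} is chosen small enough (note $2rd_2'=rd_2$, so we need $\exp(rd_1d_2'/1201)\ge\exp(C'd_1d_2)$ with the effective dimension $d_2$; since $d_2=2d_2'$ the exponents are both $\Theta(rd_1d_2')$ and the inequality holds for a suitable universal constant). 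The structural requirement is the content of \cref{eq-1110127}: $V_{\varepsilon,U}=\sqrt{1-\varepsilon^2}\,\ket{0}\otimes V_0+\varepsilon\,\ket{1}\otimes U\Delta$, where $V_0$ maps into the subspace labeled by $\ket{0}$ and $U\Delta$ maps into the orthogonal subspace labeled by $\ket{1}$; since $\ket{0}\otimes V_0$ and $\ket{1}\otimes U\Delta$ have orthogonal ranges, and $U\Delta$ ranges over a family of isometries onto (a subspace of) the orthogonal complement of the range of $\ket{0}\otimes V_0$, this is exactly the abstract form in \cref{def-1130036}. Applying \cref{thm-1110122} then gives that discriminating $\mathcal{N}$ requires $\Omega(d_1\cdot 2rd_2'/\varepsilon^2)=\Omega(rd_1d_2/\varepsilon^2)$ queries, and by the reduction above the same lower bound holds for tomography, completing the proof.

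The main obstacle I anticipate is purely the dimension bookkeeping between the two sections: \cref{thm-1110122} is stated for an abstract ``hard'' isometry set with input dimension $d_1$ and some output dimension $D$, and one must correctly identify $D=2rd_2'$ (with $d_2=2d_2'$) and check that the cardinality lower bound $\exp(rd_1d_2'/1201)$ comfortably exceeds the required $\exp(C d_1 D)=\exp(2C rd_1 d_2')$ — this forces $C$ to be at most roughly $1/2402$, which is fine for a universal constant but must be stated consistently. The other point requiring a sentence of care is that \cref{thm-1110122}'s construction assumed $d_2\ge 2d_1$ for the internal Hilbert spaces; here the relevant internal output dimension is $rd_2'\ge d_1$ and after the doubling $2rd_2'\ge 2d_1$, so the hypothesis is met exactly in the away-from-boundary regime, which is why the theorem is restricted to $r d_2\ge 2d_1$ (equivalently $rd_2'\ge d_1$, i.e.\ $r\ge d_1/d_2'=2d_1/d_2$).
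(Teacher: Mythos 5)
Your reduction is essentially the paper's proof: instantiate the hard isometry set via \cref{thm-1101611} with internal output dimension $d_2/2$, observe that each $\mathcal{E}_{\varepsilon,U}$ arises from $V_{\varepsilon,U}$ by discarding the ancilla (so a tomography algorithm with accuracy $0.03\varepsilon$ discriminates the $0.07\varepsilon$-separated channel family, and hence the isometries, with the same number of queries), check that $\{V_{\varepsilon,U}\}$ has the structure of \cref{def-1130036} because the ranges of $\ket{0}\otimes V_0$ and $\ket{1}\otimes U\Delta$ are orthogonal, and invoke \cref{thm-1110122}. Your bookkeeping (isometry output dimension $rd_2\geq 2d_1$, cardinality $\exp(rd_1d_2/2402)$ versus the required $\exp(Cd_1\cdot rd_2)$, with the constant absorbed into the $\Omega$) matches the paper's.

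The one genuine omission is the parity of $d_2$. You set $d_2'=d_2/2$ without comment, which requires $d_2$ even. For odd $d_2$ the natural fix is to run the construction inside a $(d_2-1)$-dimensional subspace of the output, i.e.\ take $d_2'=(d_2-1)/2$; the paper does exactly this, but it only works when $r(d_2-1)\geq 2d_1$, since the resulting isometries have output dimension $r(d_2-1)$ and \cref{thm-1110122} needs output dimension at least twice the input dimension. In the remaining corner of the theorem's hypotheses --- $d_2$ odd with $rd_2\geq 2d_1$ but $r(d_2-1)<2d_1$ (e.g.\ $d_1=d_2=3$, $r=2$) --- the straightforward truncation destroys the away-from-boundary condition, and the paper needs a modified hard-set construction (\cref{appdix-1140304}), which splits the input and output spaces unevenly so as to retain both the $0.07\varepsilon$ diamond-norm separation and the $\exp(\Omega(rd_1d_2))$ cardinality while still fitting \cref{def-1130036}. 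So your argument covers the main case and coincides with the paper's proof there, but as written it does not establish the theorem for all parameters in its statement; you would need to add the odd-$d_2$ case analysis, including an analogue of the appendix construction for the case $r(d_2-1)<2d_1$.
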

\begin{proof}
If $d_2$ is an even number, we call \cref{thm-1101611} with parameters $(d_1,d_2/2,r)$, 
and we can find a set of isometries $\mathcal{N}\subseteq \isochannel_{d_1,rd_2}$ with cardinality $|\mathcal{N}|\geq \exp(rd_1d_2/C)$ for a universal constant $C$ such that for any $V_1,V_2\in\mathcal{N}$, the channels $\mathcal{E}_1(\cdot)=\tr_{r}\!\left(V_1(\cdot)V_1^\dag\right)$ and $\mathcal{E}_2(\cdot)=\tr_{r}\!\left(V_2(\cdot) V_2^\dag\right)$ satisfy
\[\left\|\mathcal{E}_1-\mathcal{E}_2\right\|_\diamond\geq 0.07\varepsilon.\]
Suppose $\mathcal{A}$ is an algorithm that can output an estimate of an unknown channel in $\qchannel_{d_1,d_2}^r$ to within diamond norm error $0.03\varepsilon$ using $n$ queries to the unknown channel. 
Then, $\mathcal{A}$ can also solve the discrimination task for the isometries in $\mathcal{N}$ using $n$ queries to the unknown isometry by simply discarding the $r$-dimensional ancilla system.

On the other hand, note that the set $\mathcal{N}$ (c.f. \cref{thm-1101611} and \cref{eq-1110127}) is a ``hard'' isometry set (see \cref{def-1130036}).
Thus, \cref{thm-1110122} applies, which means $\mathcal{A}$ must use at least $n\geq \Omega(rd_1d_2/\varepsilon^2)$ queries.

If $d_2$ is an odd number and $r(d_2-1)\geq 2d_1$, then we can simply work with a $(d_2-1)$-dimensional subspace of the output space and find the set $\mathcal{N}$ by calling \cref{thm-1101611} with parameters $(d_1,(d_2-1)/2,r)$, and everything remains the same as those for the even output dimension case.
For the case $d_2$ is an odd number and $r(d_2-1)< 2d_1$, we show a modified construction of $\mathcal{N}$ that works for this case in \cref{appdix-1140304}.
\end{proof}

\section{Auxiliary facts}

We will also use the following well-known facts.
\begin{fact}\label{lemma-12272204}
Let $n\geq k$ be positive integers and $p\in [0,1]$, then
\[\binom{n}{k} \leq \exp\!\left(n H\!\left(k/n\right)\right),\]
and thus
\[\binom{n}{k} p^k (1-p)^{n-k}\leq \exp\!\left(-n D\!\left(k/n \| p\right)\right).\]
\end{fact}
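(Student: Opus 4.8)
The plan is to obtain both inequalities from a single application of the binomial theorem, evaluated at the ``empirical'' parameter $q = k/n$, followed by a rearrangement of exponents. Throughout, $H$ denotes the binary entropy in nats, $H(q) = -q\ln q - (1-q)\ln(1-q)$, and $D(q\|p) = q\ln\tfrac{q}{p} + (1-q)\ln\tfrac{1-q}{1-p}$.

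First I would dispose of the degenerate cases. If $k=0$ or $k=n$ then $\binom{n}{k}=1$ and $H(k/n)=0$, so the first inequality is an equality; for the second, $D(0\|p)=-\ln(1-p)$ and $D(1\|p)=-\ln p$ (with the conventions that these are $+\infty$ when $p=1$, resp.\ $p=0$), so $\exp(-nD(k/n\|p))$ equals $(1-p)^n$, resp.\ $p^n$, which is exactly $\binom{n}{k}p^k(1-p)^{n-k}$. Likewise, if $p\in\{0,1\}$ and $0<k<n$ then the left-hand side of the second inequality is $0$ while $D(k/n\|p)=+\infty$, so the bound holds trivially. Hence I may assume $0<q<1$ and $0<p<1$, so that every logarithm below is finite.

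For the first inequality, expand $1 = (q+(1-q))^n$ by the binomial theorem. Every summand is nonnegative, so discarding all but the $k$-th term gives $1 \geq \binom{n}{k} q^k (1-q)^{n-k}$, i.e.\ $\binom{n}{k} \leq q^{-k}(1-q)^{-(n-k)}$. Taking logarithms, $\ln\binom{n}{k} \leq -k\ln q - (n-k)\ln(1-q) = n\bigl(-q\ln q - (1-q)\ln(1-q)\bigr) = nH(q)$, which is the claimed bound $\binom{n}{k}\leq\exp(nH(k/n))$.

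For the second inequality I would multiply the first by $p^k(1-p)^{n-k}$ and collect the exponents. Since
\[
nH(q) + k\ln p + (n-k)\ln(1-p) = n\Bigl(-q\ln\tfrac{q}{p} - (1-q)\ln\tfrac{1-q}{1-p}\Bigr) = -nD(q\|p),
\]
we obtain $\binom{n}{k}p^k(1-p)^{n-k} \leq \exp(nH(k/n))\,p^k(1-p)^{n-k} = \exp(-nD(k/n\|p))$, as desired. There is essentially no obstacle; the only point requiring a little care is the bookkeeping around $p\in\{0,1\}$ and $k\in\{0,n\}$, which is why I would handle those first.
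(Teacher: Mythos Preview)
Your proof is correct and follows essentially the same approach as the paper: both use the binomial expansion $1=(q+(1-q))^n$ with $q=k/n$ and drop all but the $k$-th term to get $\binom{n}{k}q^k(1-q)^{n-k}\le 1$, which is exactly the entropy bound. You are simply more careful than the paper, handling the boundary cases $k\in\{0,n\}$ and $p\in\{0,1\}$ explicitly and spelling out the passage to the second inequality, whereas the paper leaves that implicit.
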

\begin{proof}
Note that $\binom{n}{k}\leq \frac{n^n}{k^k (n-k)^{n-k}}$ since
\[\binom{n}{k} \left(\frac{k}{n}\right)^k\left(1-\frac{k}{n}\right)^{n-k}\leq \left(\frac{k}{n}+1-\frac{k}{n}\right)^n =1.\]
\end{proof}

\begin{fact}\label{fact-5122103}
Suppose $M$ is a positive semidefinite matrix and $\ket{\psi}$ is a vector such that $\ket{\psi}\in\supp(M)$. Then, we have
\[M\sqsupseteq \ketbra{\psi}{\psi}\Longleftrightarrow 1\geq \bra{\psi}M^{-1}\ket{\psi},\]
where $M^{-1}$ is the pseudo-inverse of $M$.
\end{fact}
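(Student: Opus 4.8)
The plan is to transport the operator inequality to the support of $M$ — where $M$ is positive definite — by conjugating with the pseudo-inverse square root $M^{-1/2}$, which turns the claim into a triviality about a rank-one projector.

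Concretely, I would first observe that since $\ket{\psi}\in\supp(M)$, both $M$ and $\ketbra{\psi}{\psi}$ act trivially on $\supp(M)^{\perp}$, so $M\sqsupseteq\ketbra{\psi}{\psi}$ holds on the whole space if and only if it holds after compressing to $\supp(M)$. On that subspace $M$ is invertible, and $M^{-1/2}$ (extended by zero on $\supp(M)^{\perp}$) is invertible there; conjugating by it preserves positive semidefiniteness, so $M\sqsupseteq\ketbra{\psi}{\psi}$ is equivalent to $\Pi\sqsupseteq\ketbra{\phi}{\phi}$, where $\Pi$ is the orthogonal projector onto $\supp(M)$ and $\ket{\phi}\coloneqq M^{-1/2}\ket{\psi}$ again lies in $\supp(M)$. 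Now $\Pi\sqsupseteq\ketbra{\phi}{\phi}$ holds if and only if $\braket{\phi}{\phi}\leq 1$: necessity comes from testing the inequality on $\ket{\phi}$ itself, and sufficiency from Cauchy--Schwarz applied to vectors in the range of $\Pi$ (the complement contributes nothing since $\ket{\phi}\in\supp(M)$). Finally $\braket{\phi}{\phi}=\bra{\psi}M^{-1/2}M^{-1/2}\ket{\psi}=\bra{\psi}M^{-1}\ket{\psi}$, since the square of the pseudo-inverse square root is the pseudo-inverse, which gives the stated equivalence.

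There is essentially no obstacle here; the only care needed is in the bookkeeping around the pseudo-inverse, namely fixing $M^{-1/2}$ to be supported on $\supp(M)$ so that the conjugation is a genuine invertible transformation on the subspace where the whole inequality lives, and using the hypothesis $\ket{\psi}\in\supp(M)$ to ensure $\ket{\phi}$ is well-defined and that nothing is lost by ignoring $\supp(M)^{\perp}$. As an alternative, one could instead invoke the generalized Schur-complement criterion for the block matrix $\bigl(\begin{smallmatrix}1 & \bra{\psi}\\ \ket{\psi} & M\end{smallmatrix}\bigr)$, whose positive semidefiniteness is equivalent both to $M\sqsupseteq\ketbra{\psi}{\psi}$ (Schur complement of the scalar top-left block) and, given the range condition $\ket{\psi}\in\supp(M)$, to $1\geq\bra{\psi}M^{-1}\ket{\psi}$ (Schur complement of the bottom-right block); the equivalence in the statement follows by chaining these two characterizations.
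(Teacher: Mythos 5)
Your proof is correct and follows essentially the same route as the paper: conjugate by the pseudo-inverse square root $M^{-1/2}$ to reduce the claim to bounding the rank-one operator $\ketbra{\phi}{\phi}$ with $\ket{\phi}=M^{-1/2}\ket{\psi}$ by the projector onto $\supp(M)$, which holds iff $\braket{\phi}{\phi}=\bra{\psi}M^{-1}\ket{\psi}\leq 1$ (the paper phrases this last step via the trace, which is the same quantity). Your extra care about the support bookkeeping and the alternative Schur-complement argument are fine but not needed beyond what the paper does.
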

\begin{proof}
\[ M\sqsupseteq \ketbra{\psi}{\psi}\Longleftrightarrow  I_{\supp(M)} \sqsupseteq M^{-1/2}\ketbra{\psi}{\psi} M^{-1/2},\]
where $M^{-1/2}$ is the pseudo-inverse of $M^{1/2}$. Then
\[ I_{\supp(M)} \sqsupseteq M^{-1/2}\ketbra{\psi}{\psi} M^{-1/2}\Longleftrightarrow 1\geq \tr(M^{-1/2}\ketbra{\psi}{\psi} M^{-1/2}) = \bra{\psi}M^{-1}\ket{\psi}.\]
\end{proof}

\begin{fact}\label{lamma-12281043}
Suppose $x,M> 0$, we have
\[x\ln (M/x) \leq M/e.\]
\end{fact}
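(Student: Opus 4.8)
The plan is to reduce this two-variable statement to a one-variable one by scaling. Writing $t \coloneqq M/x$, which is strictly positive since $x,M>0$, the claimed bound $x\ln(M/x)\le M/e$ is equivalent (dividing through by $x>0$) to $\ln t \le t/e$. So it suffices to establish $\ln t \le t/e$ for every $t>0$.

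For the one-variable inequality I would invoke the elementary bound $\ln u \le u-1$, valid for all $u>0$ (the tangent-line estimate for the concave function $\ln$ at $u=1$). Applying it with $u = t/e$ gives $\ln t - 1 = \ln(t/e) \le t/e - 1$, and cancelling the $-1$ yields $\ln t \le t/e$. Undoing the substitution $t = M/x$ then gives $x\ln(M/x) \le M/e$, as desired. Equivalently, one could argue by direct optimization: $g(x) \coloneqq x\ln(M/x)$ on $(0,\infty)$ has $g'(x) = \ln(M/x)-1$, which vanishes exactly at $x = M/e$; since $g''(x) = -1/x < 0$ this critical point is the global maximum, and $g(M/e) = (M/e)\ln e = M/e$. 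Either route is only a couple of lines.

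There is essentially no obstacle here. The only points one might pause on — the behaviour as $x\to 0^+$ (where $x\ln(M/x)\to 0$), and the regime $M/x < 1$ where the left-hand side is negative and the inequality is trivial — are all subsumed by the substitution argument, which treats every $t>0$ uniformly, so no case split is actually needed.
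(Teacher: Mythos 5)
Your proof is correct, and your second route (optimizing $g(x)=x\ln(M/x)$ via $g'(x)=\ln(M/x)-1$ with maximum $M/e$ at $x=M/e$) is exactly the paper's argument; the substitution-plus-$\ln u\le u-1$ variant is an equally valid elementary alternative. Nothing to fix.
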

\begin{proof}
The derivative of the function $f(x)=x\ln(M/x)$ is $\ln(M/x)-1$ which is monotonically decreasing and equal to zero when $M=ex$. Therefore, the $f(x)\leq f(M/e)=M/e$.
\end{proof}

\bibliographystyle{alpha}
\bibliography{main}

\appendix

\section{Instantiation with odd output dimension}\label{appdix-1140304}

In this section, we present a construction of ``hard'' isometry set which induces an $\varepsilon$-net of quantum channels. The construction follows that given in \cite{oufkir2026improved} (see also \cref{sec-1121759} in our notation), but with a slight modification so that it is adapted to the odd output dimension case.

\subsection{Construction}
Suppose $d_2>1$ is an odd number, $r(d_2-1)< 2d_1$ and $rd_2\geq 2d_1$. 
Let $\mathcal{H}_\mathrm{A}\cong \mathbb{C}^{d_1}$, $\mathcal{H}_{\mathrm{B}}\cong\mathbb{C}^{d_2}$ and $\mathcal{H}_\mathrm{anc}\cong\mathbb{C}^r$ be the input, output and ancilla systems. 
Note that $d_1= \frac{r(d_2-1)}{2} + \eta$ for some integers $1\leq \eta \leq \lfloor\frac{r}{2}\rfloor$. 
Then, we consider the following decomposition
\[\mathcal{H}_\mathrm{A} = \mathcal{H}_{\mathrm{a}}\oplus\mathcal{H}'_{\mathrm{a}},\]
\[\mathcal{H}_\mathrm{B} = \mathcal{H}_{\mathrm{b}0} \oplus \mathcal{H}_{\mathrm{b}1} \oplus \mathcal{H}_{\mathrm{b}}',\]
where 
\[\mathcal{H}_{\mathrm{a}}=\spanspace\left\{\ket{i}_\mathrm{A}\,\bigg|\, 1\leq i \leq \frac{r(d_2-1)}{2}\right\},\]
\[\mathcal{H}_{\mathrm{a}}'=\spanspace\left\{\ket{i}_\mathrm{A}\,\bigg|\, \frac{r(d_2-1)}{2}+1\leq i\leq d_1\right\},\]
\[\mathcal{H}_\mathrm{b0}=\spanspace\left\{\ket{i}_\mathrm{B}\,\bigg|\, 1\leq i \leq \frac{d_2-1}{2}\right\},\]
\[\mathcal{H}_\mathrm{b1}=\spanspace\left\{\ket{i}_\mathrm{B}\,\bigg|\, \frac{d_2+3}{2}\leq i \leq d_2\right\},\]
\[\mathcal{H}_\mathrm{b}'=\spanspace\left\{\ket{i}_\mathrm{B}\,\bigg|\, i= \frac{d_2+1}{2}\right\}.\]
Then we further consider the following decomposition
\[\mathcal{H}_\mathrm{b}'\otimes \mathcal{H}_\mathrm{anc}= \mathcal{H}_{\mathrm{b}0}'\oplus\mathcal{H}_{\mathrm{b}1}',\]
where
\[\mathcal{H}_{\mathrm{b}0}'=\spanspace\left\{\ket{i}_\mathrm{B}\otimes\ket{j}_\mathrm{anc}\,\bigg|\, i= \frac{d_2+1}{2}, \,\, 1 \leq j\leq \left\lfloor\frac{r}{2}\right\rfloor\right\},\]
\[\mathcal{H}_{\mathrm{b}1}'=\spanspace\left\{\ket{i}_\mathrm{B}\otimes\ket{j}_\mathrm{anc}\,\bigg|\, i= \frac{d_2+1}{2}, \,\, \left\lfloor\frac{r}{2}\right\rfloor+1 \leq j\leq r\right\}.\]
Then, note that
\[\dim(\mathcal{H}_\mathrm{a})=\frac{r(d_2-1)}{2},\quad \dim(\mathcal{H}_{\mathrm{a}}')=\eta\leq \left\lfloor\frac{r}{2}\right\rfloor,\]
\[\dim(\mathcal{H}_{\mathrm{b}0})=\dim(\mathcal{H}_{\mathrm{b}1})=\frac{(d_2-1)}{2},\quad \dim(\mathcal{H}_{\mathrm{b}}')=1\]
\[\dim(\mathcal{H}'_{\mathrm{b}0})=\left\lfloor\frac{r}{2}\right\rfloor,\quad \dim(\mathcal{H}_{\mathrm{b}1}')=r-\left\lfloor\frac{r}{2}\right\rfloor\geq \left\lfloor\frac{r}{2}\right\rfloor.\]

Define the isometry $V_0:\mathcal{H}_\mathrm{a}\rightarrow  \mathcal{H}_\mathrm{b0}\otimes \mathcal{H}_\mathrm{anc}$ as
\begin{equation}\label{eq-1131347}
V_0=\sum_{i=1}^r \ket{i}_\mathrm{anc}\otimes K_i,
\end{equation}
such that $K_i:\mathcal{H}_\mathrm{a}\rightarrow\mathcal{H}_{\mathrm{b}0}$ satisfy
\[\left|\tr\!\left(K_i^\dag K_j\right)\right|\leq \frac{2\dim(\mathcal{H}_\mathrm{a})}{\dim(\mathcal{H}_{\mathrm{anc}})}\cdot \mathbbm{1}_{i=j},\]
where the existence of such isometry is proven in \cite[Appendix B]{oufkir2026improved}.
Define $V_0':\mathcal{H}_\mathrm{a}'\rightarrow\mathcal{H}_{\mathrm{b}0}'$ be an arbitrary isometry.
Define $\Delta:\mathcal{H}_\mathrm{a}\rightarrow \mathcal{H}_{\mathrm{b}1}\otimes \mathcal{H}_\mathrm{anc}$ be an arbitrary isometry. 
Define $\Delta':\mathcal{H}_\mathrm{a}'\rightarrow \mathcal{H}'_{\mathrm{b}1}$
be an arbitrary isometry.
Then, for $\varepsilon\in(0,1)$ and $U\in\mathbb{U}_{r(d_2-1)/2}$, we define the isometry $V_{\varepsilon,U}:\mathcal{H}_\mathrm{A}\rightarrow\mathcal{H}_\mathrm{B}\otimes\mathcal{H}_{\mathrm{anc}}$ as
\begin{equation}\label{eq-1140331}
V_{\varepsilon,U}\coloneqq \sqrt{1-\varepsilon^2}(V_0+ V_0')+\varepsilon (U\Delta+ \Delta'),
\end{equation}
where $U$ acts on $\mathcal{H}_{\mathrm{b}1}\otimes\mathcal{H}_\mathrm{anc}$ and $V_0+V_0'=V_0\oplus V_0'$, $U\Delta +\Delta'=U\Delta\oplus \Delta'$ are both direct sums of linear operators.
Moreover, the image of $U\Delta+\Delta'$ (i.e., $(\mathcal{H}_{\mathrm{b}1}\otimes\mathcal{H}_{\mathrm{anc}})\oplus \mathcal{H}_{\mathrm{b}1}'$) is orthogonal to the image of $V_0+V_0'$ (i.e., $(\mathcal{H}_{\mathrm{b}0}\otimes\mathcal{H}_{\mathrm{anc}})\oplus \mathcal{H}_{\mathrm{b}0}'$). Therefore, any subset of $\{V_{\varepsilon,U}\,|\, U\in\mathbb{U}_{r(d_2-1)/2}\}$ is a ``hard'' isometry set.

\subsection{Existence}
We have shown that any subset of $\{V_{\varepsilon,U}\,|\, U\in\mathbb{U}_{r(d_2-1)/2}\}$ (see \cref{eq-1140331}) is a ``hard'' isometry set. Then, we prove that there exists a large subset with good separation property.
\begin{theorem}\label{thm-1140332}
There exists a finite subset $\mathcal{N}$ of $\{V_{\varepsilon,U}\,|\, U\in\mathbb{U}_{r(d_2-1)/2}\}$ for $V_{\varepsilon,U}$ defined in \cref{eq-1140331} with cardinality $|\mathcal{N}|\geq \exp(rd_1d_2/100001)$, such that for any $V_1\neq V_2\in\mathcal{N}$ we have 
\[\|\tr_{\mathcal{H}_\mathrm{anc}}(V_1(\cdot)V_1^\dag)-\tr_{\mathcal{H}_\mathrm{anc}}(V_2(\cdot)V_2^\dag)\|_\diamond\geq 0.07\varepsilon.\]
\end{theorem}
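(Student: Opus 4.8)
My plan is to follow the strategy of \cite{oufkir2026improved} (recorded here as \cref{thm-1101611}), with the orthogonal decomposition $\mathcal{H}_\mathrm{B}=\mathcal{H}_{\mathrm{b}0}\oplus\mathcal{H}_{\mathrm{b}1}\oplus\mathcal{H}_\mathrm{b}'$ playing the role that the ancillary qubit $\mathbb{C}^2$ plays there: the ``$V_0$-branch'' $\sqrt{1-\varepsilon^2}V_0$ lands in $\mathcal{H}_{\mathrm{b}0}\otimes\mathcal{H}_\mathrm{anc}$, the ``$U\Delta$-branch'' $\varepsilon U\Delta$ lands in $\mathcal{H}_{\mathrm{b}1}\otimes\mathcal{H}_\mathrm{anc}$, and $\mathcal{H}_\mathrm{b}'\otimes\mathcal{H}_\mathrm{anc}=\mathcal{H}_{\mathrm{b}0}'\oplus\mathcal{H}_{\mathrm{b}1}'$ hosts the padding isometries $V_0',\Delta'$ that use up the leftover $\eta\le\lfloor r/2\rfloor$ input directions of $\mathcal{H}_\mathrm{A}$. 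Two ingredients must be supplied: (i) a net $\mathcal{M}\subseteq\mathbb{U}_{r(d_2-1)/2}$ with $|\mathcal{M}|\ge\exp(rd_1d_2/100001)$; and (ii) for all $U_1\neq U_2$ in $\mathcal{M}$, the bound $\|\tr_{\mathcal{H}_\mathrm{anc}}(V_{\varepsilon,U_1}(\cdot)V_{\varepsilon,U_1}^\dag)-\tr_{\mathcal{H}_\mathrm{anc}}(V_{\varepsilon,U_2}(\cdot)V_{\varepsilon,U_2}^\dag)\|_\diamond\ge 0.07\varepsilon$; then $\mathcal{N}=\{V_{\varepsilon,U}\mid U\in\mathcal{M}\}$ works. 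The structural point enabling both is that the entire $U$-dependence sits in $\varepsilon U\Delta:\mathcal{H}_\mathrm{a}\to\mathcal{H}_{\mathrm{b}1}\otimes\mathcal{H}_\mathrm{anc}$, where $\dim\mathcal{H}_\mathrm{a}=\dim(\mathcal{H}_{\mathrm{b}1}\otimes\mathcal{H}_\mathrm{anc})=m\coloneqq r(d_2-1)/2$, so $U\Delta$ is a genuine unitary between these spaces; the remaining pieces are $U$-independent, and the images of $V_0,V_0',U\Delta,\Delta'$ are pairwise orthogonal in $\mathcal{H}_\mathrm{B}\otimes\mathcal{H}_\mathrm{anc}$.

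For (i), I would take $\mathcal{M}$ to be an $\alpha$-packing of $\mathbb{U}_m$ in the normalized Hilbert--Schmidt distance, for a suitable absolute constant $\alpha$; a standard volumetric (metric-entropy) estimate on the $m^2$-real-dimensional manifold $\mathbb{U}_m$ yields $|\mathcal{M}|\ge\exp(c\,m^2)$ for an absolute $c=c(\alpha)>0$ (the same packing used in \cite{oufkir2026improved}). One then compares $m^2$ with $rd_1d_2$: the hypothesis $rd_2\ge 2d_1$ gives $d_1\le rd_2/2$, hence $rd_1d_2\le r^2d_2^2/2$, and since $d_2$ is odd with $d_2\ge 3$ we have $d_2\le\frac32(d_2-1)$, so $rd_1d_2\le\frac98\cdot\frac{r^2(d_2-1)^2}{2}=\frac92\,m^2$. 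Therefore $|\mathcal{M}|\ge\exp(\frac{2c}{9}rd_1d_2)$, and it suffices to fix $\alpha$ so that $\frac{2c(\alpha)}{9}\ge\frac{1}{100001}$, which leaves ample slack.

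For (ii), I would lower-bound the diamond norm by feeding in the state $\rho$ that is maximally entangled between $\mathcal{H}_\mathrm{a}$ and a reference copy of $\mathcal{H}_\mathrm{a}$. Since $V_0',\Delta'$ are supported on $\mathcal{H}_\mathrm{a}'\perp\mathcal{H}_\mathrm{a}$, they contribute nothing; expanding $V_{\varepsilon,U_i}\rho V_{\varepsilon,U_i}^\dag$ and subtracting, the $V_0$-diagonal parts cancel, and after $\tr_{\mathcal{H}_\mathrm{anc}}$ one is left with an $O(\varepsilon^2)$ term in the $\mathcal{H}_{\mathrm{b}1}$-block (a difference of ``dephased'' unitary images) and an $O(\varepsilon)$ cross-term in the $\mathcal{H}_{\mathrm{b}0}$-versus-$\mathcal{H}_{\mathrm{b}1}$ off-diagonal block that is linear in $U_1-U_2$. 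Restricting the Choi-state difference to that off-diagonal block lower-bounds its trace norm; using $\|(U_1-U_2)\Delta\|_F^2=\|U_1-U_2\|_F^2$ and the near-orthogonality of the Kraus operators $K_i$ of $V_0$ (i.e.\ $|\tr(K_i^\dag K_j)|\le\frac{2\dim\mathcal{H}_\mathrm{a}}{\dim\mathcal{H}_\mathrm{anc}}\mathbbm{1}_{i=j}$), this trace norm is at least $c'\varepsilon$ for an absolute $c'$ once $U_1,U_2$ are $\alpha$-separated. Absorbing the $O(\varepsilon^2)$ correction in the stated small-$\varepsilon$ regime leaves separation at least $0.07\varepsilon$; modulo the restriction to the right subspaces, this is the same computation as in the even-output-dimension case of \cite{oufkir2026improved}.

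I expect the step requiring the most care — more bookkeeping than genuine obstacle — to be checking that the padding data (the operators $V_0',\Delta'$ and the decompositions $\mathcal{H}_\mathrm{B}=\mathcal{H}_{\mathrm{b}0}\oplus\mathcal{H}_{\mathrm{b}1}\oplus\mathcal{H}_\mathrm{b}'$, $\mathcal{H}_\mathrm{b}'\otimes\mathcal{H}_\mathrm{anc}=\mathcal{H}_{\mathrm{b}0}'\oplus\mathcal{H}_{\mathrm{b}1}'$) genuinely decouples from the $U$-dependence and from the distinguishing input, so that the \cite{oufkir2026improved} separation estimate transfers with only a weakened (and unoptimized) absolute constant; re-deriving the cross-term trace-norm lower bound for the near-orthogonal $K_i$'s is where the remaining, routine work sits.
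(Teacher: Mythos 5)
Your overall architecture (pad with $V_0',\Delta'$, isolate the $U$-dependence in the $\mathcal{H}_{\mathrm{b}1}\otimes\mathcal{H}_\mathrm{anc}$ branch, lower-bound the diamond norm through the Choi/cross-term block) matches the paper, and your cardinality arithmetic ($rd_1d_2\leq\tfrac92 m^2$ with $m=r(d_2-1)/2$) is fine. The genuine gap is in your ingredient (ii). You propose to take $\mathcal{M}$ as a deterministic $\alpha$-packing of $\mathbb{U}_m$ in Hilbert--Schmidt distance and then assert that HS-separation of $U_1,U_2$, together with the near-orthogonality of the $K_i$, forces
\[
\Big\|\tr_{\mathcal{H}_\mathrm{anc}}\!\Big(\kett{V_0+V_0'}\big(\bbra{U_1\Delta}-\bbra{U_2\Delta}\big)\Big)\Big\|_1\;\gtrsim\; d_1 .
\]
This implication is exactly the crux, and you neither prove it nor is it the ``same computation'' as in \cite{oufkir2026improved}: HS-separation only controls $\|(U_1-U_2)\Delta\|_F\sim\sqrt{m}$, while the operator above has rank at most $r$ and Frobenius norm at most $\sqrt{3d_1/r}\,\|U_1-U_2\|_F\sim\sqrt{d_1d_2}$, so its trace norm is a priori anywhere between $\sim\sqrt{d_1d_2}$ and $\sim d_1$; when $r$ is large the needed bound $0.05\,d_1$ sits at the very top of this range, and nothing in a metric packing prevents adversarial pairs whose cross term collapses toward the Frobenius scale after the partial trace over the $r$-dimensional ancilla. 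Establishing that the trace norm is $\Omega(d_1)$ is precisely the content of \cref{lemma-1131330}, and the paper (following \cite{oufkir2026improved}) gets it only \emph{probabilistically}: \cref{lemma-1140127} computes $\E\tr(|F|^2)=\tfrac{d_2-1}{d_1}$ and $\E\tr(|F|^4)\leq\tfrac{288}{r^3}$, H\"older then gives $\E\tr|F|\geq 9/100$ for Haar-random pairs, a Lipschitz bound plus a Levy-type concentration inequality on the unitary group makes $\Pr[\tr|F|\leq 1/20]\leq\exp(-rd_1d_2/50000)$, and a union bound over $\exp(rd_1d_2/100001)$ independent Haar samples produces the net and the separation simultaneously. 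Your two-step plan (packing first, separation claimed afterwards) removes this mechanism and replaces it with an unsubstantiated deterministic claim, so as written the proof does not go through; to repair it you would either have to prove the deterministic implication (which would be a new and stronger statement than anything used in the paper) or revert to the random-net/concentration argument.

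A minor additional remark: your idea of extracting the $\mathcal{H}_{\mathrm{b}0}$--$\mathcal{H}_{\mathrm{b}1}$ off-diagonal block of the Choi difference (a trace-norm contraction) is a legitimate alternative to the paper's triangle-inequality step with the $2\varepsilon^2 d_1$ loss and the identity $\|X+X^\dag\|_1=2\|X\|_1$ for $X^2=0$; that part of your sketch is sound, but it does not touch the missing separation lemma.
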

\begin{proof}
The proof follows essentially the argument in \cite{oufkir2026improved}, with slight modifications.
First, we need the following lemma.
\begin{lemma}\label{lemma-1131330}
There exists a finite subset $\mathcal{M}\subseteq \mathbb{U}_{r(d_2-1)/2}$ with cardinality $|\mathcal{M}|\geq \exp(rd_1d_2/100001)$ such that for any $U_1\neq U_2\in\mathcal{M}$, 
\begin{equation}\label{eq-1130328}
\frac{1}{d_1}\left\|\tr_{\mathcal{H}_\mathrm{anc}}\!\left((\kett{V_0}+\kett{V_0'})\Big(\bbra{U_1\Delta}-\bbra{U_2\Delta}\Big)\right)\right\|_1\geq 0.05.
\end{equation}
\end{lemma}

Define the quantum channel $\mathcal{E}_{\varepsilon,U}: \mathcal{L}(\mathcal{H}_{\mathrm{A}})\rightarrow\mathcal{L}(\mathcal{H}_{\mathrm{B}})$ as 
\[\mathcal{E}_{\varepsilon,U}(\cdot)\coloneqq \tr_{\mathcal{H}_{\mathrm{anc}}}\!\left(V_{\varepsilon,U}(\cdot) V_{\varepsilon,U}^\dag\right).\]
Let $C_\mathcal{E}$ denote the (unnormalized) Choi state of quantum channel $\mathcal{E}$ and $\mathcal{M}$ be the set given in \cref{lemma-1131330}.
Then, for any $U_1\neq U_2\in\mathcal{M}$, we have
\begin{align}
\left\|C_{\mathcal{E}_{\varepsilon,U_1}}-C_{\mathcal{E}_{\varepsilon,U_2}}\right\|_1 &=\Big\|\tr_{\mathcal{H}_{\mathrm{anc}}}\!\left(\kettbbra{V_{\varepsilon,U_1}}{V_{\varepsilon,U_1}}\right)-\tr_{\mathcal{H}_{\mathrm{anc}}}\!\left(\kettbbra{V_{\varepsilon,U_2}}{V_{\varepsilon,U_2}}\right)\Big\|_1 \nonumber\\
&=\bigg\|\varepsilon^2\Big(\tr_{\mathcal{H}_{\mathrm{anc}}}\!\left(\kettbbra{\Delta_{U_1}}{\Delta_{U_1}}\right)-\tr_{\mathcal{H}_{\mathrm{anc}}}\!\left(\kettbbra{\Delta_{U_2}}{\Delta_{U_2}}\right)\Big) \label{eq-1130257}\\
&\quad\quad\quad +\varepsilon\sqrt{1-\varepsilon^2} \tr_{\mathcal{H}_{\mathrm{anc}}}\!\Big(\kettbbra{V_0+V_0'}{(U_1-U_2)\Delta}\Big) \nonumber \\
&\quad\quad\quad + \varepsilon\sqrt{1-\varepsilon^2} \tr_{\mathcal{H}_{\mathrm{anc}}}\!\Big(\kettbbra{(U_1-U_2)\Delta}{V_0+V_0'}\Big) \bigg\|_1 \nonumber \\
&\geq \varepsilon\sqrt{1-\varepsilon^2}\bigg\|\tr_{\mathcal{H}_{\mathrm{anc}}}\!\Big(\kettbbra{V_0+V_0'}{(U_1-U_2)\Delta}\Big) \nonumber \\
&\qquad \qquad\qquad\qquad+ \tr_{\mathcal{H}_{\mathrm{anc}}}\!\Big(\kettbbra{(U_1-U_2)\Delta}{V_0+V_0'}\Big)\bigg\|_1 - 2 \varepsilon^2 d_1 \label{eq-1130259} \\
& = 2\varepsilon\sqrt{1-\varepsilon^2} \bigg\|\tr_{\mathcal{H}_{\mathrm{anc}}}\!\Big(\kettbbra{V_0+V_0'}{(U_1-U_2)\Delta}\Big)\bigg\|_1-2\varepsilon^2 d_1 \label{eq-1130304}\\
&\geq 0.1\varepsilon\sqrt{1-\varepsilon^2}d_1 -2\varepsilon^2d_1 \label{eq-1130327} \\
&\geq 0.07\varepsilon d_1. \label{eq-1130422}
\end{align}
In \cref{eq-1130257} we define $\Delta_U=U\Delta+\Delta'$. In \cref{eq-1130259} we used 
\[\|\tr_{\mathcal{H}_{\mathrm{anc}}}(\kettbbra{\Delta_U}{\Delta_U})\|_1 = \tr(\kettbbra{\Delta_U}{\Delta_U})=\dim(\mathcal{H}_\mathrm{a})+\dim(\mathcal{H}'_\mathrm{a})=d_1. \]
In \cref{eq-1130304} we used the fact that $\tr_{\mathcal{H}_{\mathrm{anc}}}\!\Big(\kettbbra{V_0+V_0'}{(U_1-U_2)\Delta}\Big)$ is a linear operator supported on  $\mathcal{\mathcal{H}_\mathrm{A}}\otimes\mathcal{H}_{\mathrm{b1}}$ and its image is in $\mathcal{H}_\mathrm{A}\otimes (\mathcal{H}_{\mathrm{b}0}\oplus \mathcal{H}_\mathrm{b}')$, which is orthogonal to $\mathcal{\mathcal{H}_\mathrm{A}}\otimes\mathcal{H}_{\mathrm{b1}}$; and for any linear operator $X$ such that $X^2=0$, we have 
\[\|X+X^\dag\|_1=\tr\!\left(\sqrt{(X+X^\dag)^2}\right)=\tr\!\left(\sqrt{XX^\dag+X^\dag X}\right)=\tr\!\left(\sqrt{XX^\dag}\right)+\tr\!\left(\sqrt{X^\dag X}\right),\]
where the last equality is because $\supp(XX^\dag)\perp\supp(X^\dag X)$.
In \cref{eq-1130327} we used \cref{eq-1130328}.
In \cref{eq-1130422} we used that $\varepsilon\leq 0.01$.
Therefore, we can lower bound the diamond norm 
\[\left\|\mathcal{E}_{\varepsilon,U_1}-\mathcal{E}_{\varepsilon,U_2}\right\|_\diamond\geq \frac{1}{d}\left\|C_{\mathcal{E}_{\varepsilon,U_1}}-C_{\mathcal{E}_{\varepsilon,U_2}}\right\|_1\geq 0.07\varepsilon.\]
Thus, the set $\mathcal{N}=\{V_{\varepsilon,U} \,|\, U\in\mathcal{M}\}$ is the desired set.
\end{proof}

\subsection{\texorpdfstring{Proof of Lemma \ref{lemma-1131330}}{Proof of Lemma A.2}}
\begin{proof}
Let $\hat{V}_0\coloneqq V_0+V_0'$. Note that $\hat{V}_0: \mathcal{H}_\mathrm{A}\rightarrow (\mathcal{H}_{\mathrm{b}0}\oplus\mathcal{H}_{\mathrm{b}}')\otimes\mathcal{H}_{\mathrm{anc}}$ is an isometry.
The isometry $\hat{V}_0$ can also be written as
\[\hat{V}_0=\sum_{i=1}^r \ket{i}_{\mathrm{anc}}\otimes \left(K_i\oplus \left|\frac{d_2+1}{2}\right\rangle\!\bra{z_i}\right),\]
where $K_i$ are defined in \cref{eq-1131347}, $\{\ket{z_i}\}_{i=1}^{\eta}$ are an orthonormal basis of $\mathcal{H}_\mathrm{a}'$ and $\ket{z_i}=0$ for $\eta< i\leq r$.
Let us define 
\[K_i'=K_i\oplus \left|\frac{d_2+1}{2}\right\rangle\!\bra{z_i},\]
then $K_i'$ satisfy
\begin{equation}\label{eq-1132146}
\left|\tr\!\left(K_i'^\dag K'_j\right)\right|= \left|\tr\!\left(K_i^\dag K_j\right)+\mathbbm{1}_{i=j\in[\eta]}\right|\leq \left(\frac{2\dim(\mathcal{H}_\mathrm{a})}{\dim(\mathcal{H}_{\mathrm{anc}})}+1\right)\cdot \mathbbm{1}_{i=j}\leq \frac{3\dim(\mathcal{H}_\mathrm{A})}{\dim(\mathcal{H}_{\mathrm{anc}})}\cdot \mathbbm{1}_{i=j}=\frac{3d_1}{r}\cdot \mathbbm{1}_{i=j},
\end{equation}
where in the last inequality we used that $\dim(\mathcal{H}_\mathrm{a})=r(d_2-1)/2\geq r=\dim(\mathcal{H}_{\mathrm{anc}})$ and $\dim(\mathcal{H}_\mathrm{a})\leq \dim(\mathcal{H}_{\mathrm{A}})$.
On the other hand, $U\Delta$ is an isometry from $\mathcal{H}_\mathrm{a}\subseteq\mathcal{H}_\mathrm{A}$ to $\mathcal{H}_{\mathrm{b}1}\otimes\mathcal{H}_{\mathrm{anc}}$.

Then, we need the following lemma:
\begin{lemma}\label{lemma-1140127}
For $U_x,U_y\in\mathbb{U}_{r(d_2-1)/2}$, let us define 
\[F(U_x,U_y)=\frac{1}{d_1}\tr_{\mathcal{H}_\mathrm{anc}}\!\left(\kett{\hat{V}_0}\Big(\bbra{U_x\Delta}-\bbra{U_y\Delta}\Big)\right),\]
then the function $f(U_x,U_y)=\|F(U_x,U_y)\|_1=\tr(|F(U_x,U_y)|)$ is $\sqrt{\frac{2}{d_1}}$-Lipschitz with respect to the $\ell_2$-sum of the $2$-norms (Frobenius norm). 
Furthermore, for independent random $U_x,U_y\sim \mathbb{U}_{r(d_2-1)/2}$, we have $\E\!\left[\tr(|F(U_x,U_y)|^2)\right]=\frac{d_2-1}{d_1}$, and $\E\!\left[\tr(|F(U_x,U_y)|^4)\right]\leq \frac{288}{r^3}$.
\end{lemma}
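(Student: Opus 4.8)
The plan is to derive all three assertions from the single building block $G(U):=\tr_{\mathcal H_{\mathrm{anc}}}\!\bigl(\kettbbra{\hat{V}_0}{U\Delta}\bigr)$, for which $F(U_x,U_y)=\tfrac1{d_1}\bigl(G(U_x)-G(U_y)\bigr)$, together with two structural facts. First, $\mathbb U_{r(d_2-1)/2}$ acts on $\mathcal H_{\mathrm{b}1}\otimes\mathcal H_{\mathrm{anc}}$, whose dimension equals $\dim\mathcal H_{\mathrm a}=r(d_2-1)/2$; hence $\Delta$ is actually a unitary and $\tilde U:=U\Delta$ is Haar-distributed. Second, decomposing an operator $A$ with codomain of the form $\mathcal H\otimes\mathcal H_{\mathrm{anc}}$ into its $\mathcal H_{\mathrm{anc}}$-slices $A_m:=(I\otimes\bra{m}_{\mathrm{anc}})A$, one has $\tr_{\mathcal H_{\mathrm{anc}}}(\kettbbra{A}{B})=\sum_m\kettbbra{A_m}{B_m}$; in particular $G(U)=\sum_{k=1}^{r}\kettbbra{K'_k}{M^U_k}$, where $K'_k$ are the Kraus operators of $\hat V_0$ from \eqref{eq-1131347}--\eqref{eq-1132146} and $M^U_k$ are the $\mathcal H_{\mathrm{anc}}$-slices of $U\Delta$. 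The only probabilistic inputs will be $\E_U[U]=0$, the first-order Haar average $\E_U[U^\dagger AU]=\tfrac{\tr A}{D}I$ with $D=r(d_2-1)/2$, and the second-order Weingarten formula.

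For the Lipschitz bound, since $f=\|F\|_1$ the reverse triangle inequality gives $\bigl|f(U_x,U_y)-f(U_x',U_y')\bigr|\le\|F(U_x,U_y)-F(U_x',U_y')\|_1$, and by linearity of flattening the right-hand side equals $\tfrac1{d_1}\bigl\|\tr_{\mathcal H_{\mathrm{anc}}}\!\bigl(\kettbbra{\hat V_0}{(U_x-U_x')\Delta}\bigr)-\tr_{\mathcal H_{\mathrm{anc}}}\!\bigl(\kettbbra{\hat V_0}{(U_y-U_y')\Delta}\bigr)\bigr\|_1$. From the slice decomposition and Cauchy--Schwarz, $\|\tr_{\mathcal H_{\mathrm{anc}}}(\kettbbra{A}{B})\|_1\le\sum_m\|A_m\|_F\|B_m\|_F\le\|A\|_F\|B\|_F$; applying this with $A=\hat V_0$ (so $\|A\|_F^2=\tr(\hat V_0^\dagger\hat V_0)=d_1$) and $B=W\Delta$ for $W\in\{U_x-U_x',\,U_y-U_y'\}$ (so $\|W\Delta\|_F\le\|W\|_F$, using $\Delta\Delta^\dagger\sqsubseteq I$) yields $\bigl|f(U_x,U_y)-f(U_x',U_y')\bigr|\le\tfrac1{\sqrt{d_1}}\bigl(\|U_x-U_x'\|_F+\|U_y-U_y'\|_F\bigr)\le\sqrt{2/d_1}\,\sqrt{\|U_x-U_x'\|_F^2+\|U_y-U_y'\|_F^2}$, the claimed Lipschitz constant.

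For the second moment, independence of $U_x,U_y$ and $\E_U[G(U)]=0$ (which follows from $\E_U[U]=0$ since $G$ is linear in the entries of $U$) give $\E[\tr(|F|^2)]=\tfrac{2}{d_1^2}\,\E_U[\|G(U)\|_F^2]$. Expanding $\|G(U)\|_F^2=\sum_{k,l}\tr(K_k'^\dagger K'_l)\,\overline{\tr((M^U_k)^\dagger M^U_l)}$ and using $(M^U_k)^\dagger M^U_l=\Delta^\dagger U^\dagger(I_{\mathcal H_{\mathrm{b}1}}\otimes\ketbra{k}{l})\,U\Delta$, the first-order Haar average gives $\E_U[\tr((M^U_k)^\dagger M^U_l)]=\dim(\mathcal H_{\mathrm{b}1})\,\delta_{kl}=\tfrac{d_2-1}{2}\,\delta_{kl}$ (using $\tr(\Delta\Delta^\dagger)=D=\dim(\mathcal H_{\mathrm{b}1})\dim(\mathcal H_{\mathrm{anc}})$); combined with $\sum_k\tr(K_k'^\dagger K'_k)=\tr(\hat V_0^\dagger\hat V_0)=d_1$, this yields $\E_U[\|G(U)\|_F^2]=\tfrac{(d_2-1)d_1}{2}$, hence $\E[\tr(|F|^2)]=(d_2-1)/d_1$.

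For the fourth moment, I would first reduce to a single unitary: the triangle inequality for $\|\cdot\|_4$ and $(a+b)^4\le 8(a^4+b^4)$ give $\E[\tr(|F|^4)]=\tfrac1{d_1^4}\E[\|G(U_x)-G(U_y)\|_4^4]\le\tfrac{16}{d_1^4}\,\E_U[\|G(U)\|_4^4]$. Here the near-orthogonality $\tr(K_k'^\dagger K'_l)=c_k\delta_{kl}$ with $0\le c_k\le 3d_1/r$ from \eqref{eq-1132146} is essential: it forces $G(U)^\dagger G(U)=\sum_k c_k\kettbbra{M^U_k}{M^U_k}$, so $\|G(U)\|_4^4=\tr\bigl((G^\dagger G)^2\bigr)=\sum_{k,l}c_kc_l\,\bigl|\tr((M^U_k)^\dagger M^U_l)\bigr|^2\le\bigl(\tfrac{3d_1}{r}\bigr)^2\tr(N_U^2)$, where $N_U:=\sum_k\kettbbra{M^U_k}{M^U_k}=\tr_{\mathcal H_{\mathrm{anc}}}(\kettbbra{\tilde U}{\tilde U})$ with $\tilde U=U\Delta$ Haar. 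Thus $N_U$ is the unnormalized marginal on $\mathcal H_{\mathrm{b}1}\otimes\mathcal H_{\mathrm a}$ of $\kettbbra{\tilde U}{\tilde U}$, i.e. of a Haar-rotation of the unnormalized maximally entangled state; writing $\tr(N_U^2)$ as a sum of degree-$(2,2)$ monomials in the entries of $\tilde U$ and $\overline{\tilde U}$ and averaging with $\Wg(e,D)=\tfrac1{D^2-1}$, $\Wg((12),D)=\tfrac{-1}{D(D^2-1)}$, the four Weingarten contributions collapse to $\E_U[\tr(N_U^2)]=\dim(\mathcal H_{\mathrm{b}1})^2\,\dim(\mathcal H_{\mathrm{anc}})=\bigl(\tfrac{d_2-1}{2}\bigr)^2 r$. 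Plugging back and using $d_1\ge r(d_2-1)/2$ (so $(d_2-1)^2/d_1^2\le 4/r^2$) gives $\E[\tr(|F|^4)]\le\tfrac{16}{d_1^4}\cdot\tfrac{9d_1^2}{r^2}\cdot\tfrac{r(d_2-1)^2}{4}=\tfrac{36(d_2-1)^2}{d_1^2 r}\le\tfrac{144}{r^3}\le\tfrac{288}{r^3}$. The main obstacle is precisely this last Weingarten evaluation: one must track the $\mathcal H_{\mathrm{b}1}$, $\mathcal H_{\mathrm{anc}}$ and $\mathcal H_{\mathrm a}$ indices carefully so that the four $\Wg$ terms cancel down to $\dim(\mathcal H_{\mathrm{b}1})^2\dim(\mathcal H_{\mathrm{anc}})$; everything else is routine once the reduction to $G(U)$ and the Haar identification $\tilde U=U\Delta$ are set up.
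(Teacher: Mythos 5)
Your proposal is correct, and for the fourth moment it takes a genuinely different route from the paper. The paper cites \cite{oufkir2026improved} for the Lipschitz part, whereas you actually prove it via the slice bound $\|\tr_{\mathcal{H}_\mathrm{anc}}(\kettbbra{A}{B})\|_1\le\|A\|_F\|B\|_F$ — that argument is sound and gives exactly $\sqrt{2/d_1}$. Your second-moment computation is essentially the paper's (first-order Haar/Schur average, orthogonality of the $K_i'$), just repackaged through $\E\|G(U)\|_F^2$. For the fourth moment, the paper expands the full fourth power, controls the $U_x$--$U_y$ cross terms by a Cauchy--Schwarz/AM--GM step, and then invokes the Weingarten second-moment bound from Eq.~(79) of \cite{oufkir2026improved}; you instead decouple the two unitaries up front via the Schatten-$4$ triangle inequality and $(a+b)^4\le 8(a^4+b^4)$, use the exact orthogonality $\tr(K_k'^\dag K_l')=c_k\delta_{kl}$, $c_k\le 3d_1/r$, to diagonalize $G^\dag G$, and reduce everything to $\E[\tr(N_U^2)]$ with $N_U=\tr_{\mathcal{H}_\mathrm{anc}}(\kettbbra{U\Delta}{U\Delta})$. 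The one step you leave unexecuted and flag as the ``main obstacle'' — the Weingarten evaluation of $\E[\tr(N_U^2)]$ — is in fact not an obstacle at all: since $\dim\mathcal{H}_\mathrm{a}=\dim(\mathcal{H}_{\mathrm{b}1}\otimes\mathcal{H}_\mathrm{anc})$, the operator $\tilde U=U\Delta$ is unitary, so $\tilde U\tilde U^\dag=I$ forces $M^U_k (M^U_l)^\dag=\delta_{kl}I_{\mathcal{H}_{\mathrm{b}1}}$, hence $\tr((M^U_k)^\dag M^U_l)=\delta_{kl}\dim(\mathcal{H}_{\mathrm{b}1})$ and $\tr(N_U^2)=r\,\dim(\mathcal{H}_{\mathrm{b}1})^2=r\bigl(\tfrac{d_2-1}{2}\bigr)^2$ holds deterministically, for every $U$ — no Weingarten calculus needed (and this is consistent with the paper's Eq.~\eqref{eq-1140038}, whose right-hand side collapses to $\delta_{ij}\dim(\mathcal{H}_{\mathrm{b}1})^2$ when $d_\mathrm{a}=rd_\mathrm{b}$). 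With that observation your chain closes completely and even yields the slightly stronger bound $144/r^3\le 288/r^3$; what your approach buys is a self-contained proof (including the Lipschitz constant) that avoids importing the external Weingarten estimate, at the price of using the squareness of $\Delta$, which the paper's more general argument does not need.
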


By the H\"older's inequality we have
\[\E\!\left[\tr\!\left(|F(U_x,U_y)|^2\right)\right]\leq \E\!\left[\tr\!\left(|F(U_x,U_y)|^4\right)\right]^{1/3}\E\!\left[\tr\!\left(|F(U_x,U_y)|\right)\right]^{2/3},\]
which, combined with \cref{lemma-1140127}, implies
\[\E\!\left[\tr\!\left(|F(U_x,U_y)|\right)\right]^2\geq \frac{(d_2-1)^3r^3}{288 \cdot d_1^3}\geq \frac{2^3\cdot 8}{3^3\cdot 288}=\frac{2}{243},\]
where we used that $d_2-1\geq 2d_2/3$ and $rd_2\geq 2d_1$. Thus $\E\!\left[\tr\!\left(|F(U_x,U_y)|\right)\right]> 9/100$.
Then, we can use a generalized Levy's lemma on compact groups~\cite[Corollary 17]{10.1214/ECP.v18-2551} to prove the concentration result:
\[\Pr\!\left[\tr\!\left(|F(U_x,U_y)|\right)\leq \frac{1}{20}\right]\leq \exp\!\left(-\frac{r(d_2-1)}{2}\cdot \frac{d_1}{25^2\cdot 12\cdot 2}\right)=\exp\!\left(-\frac{rd_1(d_2-1)}{30000}\right)\leq \exp\!\left(-\frac{rd_1d_2}{50000}\right).\]
Then, we independently sample $\exp(rd_1d_2/100001)$ Haar random unitaries in $\mathbb{U}_{r(d_2-1)/2}$ and the union bound shows that there exists a non-zero probability that for any pair $U_x, U_y$, we have $\tr(|F(U_x,U_y)|)\geq 1/20$.
Thus, there exists a set with cardinality $\geq \exp(rd_1d_2/100001)$ such that \cref{eq-1130328} holds.
\end{proof}

\subsection{\texorpdfstring{Proof of \cref{lemma-1140127}}{Proof of Lemma A.3}}
\begin{proof}
For the Lipschitz continuity, the proof is the same as that given in \cite{oufkir2026improved}.

Define $K_{x,i}=\bra{i}_\mathrm{anc} U_x\Delta$, $K_{y,i}=\bra{i}_\mathrm{anc} U_y\Delta$. This means 
\[F(U_x,U_y)=\frac{1}{d_1}\sum_{i=1}^r\kett{K_i'}\Big(\bbra{K_{x,i}}-\bbra{K_{y,i}}\Big).\]

Then, we note that
\begin{align}
\E\!\left[\tr(|F(U_x,U_y)|^2)\right]&=\frac{1}{d_1^2}\E\!\left[\tr\!\left(\sum_{i,j=1}^r \kett{K_i'}\Big(\bbra{K_{x,i}}-\bbra{K_{y,i}}\Big)\Big(\kett{K_{x,j}}-\kett{K_{y,j}}\Big)\bbra{K_j'}\right)\right] \nonumber\\
&=\frac{1}{d_1^2}\E\!\left[\sum_{i}^r \bbrakett{K_i'}{K_i'}\Big(\bbra{K_{x,i}}-\bbra{K_{y,i}}\Big)\Big(\kett{K_{x,i}}-\kett{K_{y,i}}\Big)\right]\nonumber \\
&=\frac{2}{d_1^2}\sum_{i=1}^r\bbrakett{K_i'}{K_i'} \frac{\dim(\mathcal{H}_\mathrm{a})}{r}\label{eq-1131627} \\
&=\frac{2}{d_1^2}\frac{\dim(\mathcal{H}_\mathrm{A})\dim(\mathcal{H}_\mathrm{a})}{r} \label{eq-1131700}\\
&= \frac{d_2-1}{d_1} \nonumber 
\end{align}
where \cref{eq-1131627} is because for $z_1,z_2\in\{x,y\}$, we have
\begin{align}
\E\!\left[\bbrakett{K_{z_1,i}}{K_{z_2,i}}\right]&=\E\!\left[\tr\!\left(K_{z_1,i}^\dag K_{z_2,i}\right)\right] =\tr\!\left(\Delta^\dag \E\!\left[U_{z_1}^\dag \ket{i}_\mathrm{anc} \bra{i}_\mathrm{anc}U_{z_2} \right]\Delta\right) \nonumber\\
&=\mathbbm{1}_{z_1=z_2}\frac{1}{r}  \tr(\Delta^\dag \Delta) \label{eq-1131652}\\
&=\mathbbm{1}_{z_1=z_2}\frac{\dim(\mathcal{H}_\mathrm{a})}{r}, \nonumber
\end{align}
where \cref{eq-1131652} is due to Schur's lemma, and \cref{eq-1131700} is because $\sum_{i} \bbrakett{K_i'}{K_i'}=\tr(\sum_i K_i'^\dag K_i)=\tr(I_{\mathrm{A}})=\dim(\mathcal{H}_\mathrm{A})$.

Then, we 
\begin{align}
\E\!\left[\tr(|F(U_x,U_y)|^4)\right]&=\frac{1}{d_1^4}\E\!\Bigg[\sum_{i,j,k,l=1}^r \tr\!\bigg(\kett{K_i'}\Big(\bbra{K_{x,i}}-\bbra{K_{y,i}}\Big)\Big(\kett{K_{x,j}}-\kett{K_{y,j}}\Big)\bbra{K'_j} \\
&\qquad\qquad\qquad\qquad\kett{K'_{k}}\Big(\bbra{K_{x,k}}-\bbra{K_{y,k}}\Big)\Big(\kett{K_{x,l}}-\kett{K_{y,l}}\Big)\bbra{K'_{l}}\bigg)\Bigg]  \nonumber \\ 
&\leq \frac{9}{r^2d_1^2}\sum_{i,j=1}^r\E\!\bigg[\left|\Big(\bbra{K_{x,i}}-\bbra{K_{y,i}}\Big)\Big(\kett{K_{x,j}}-\kett{K_{y,j}}\Big)\right|^2\bigg]\label{eq-1132143} \\
&\leq \frac{36}{r^2d_1^2}\sum_{i,j=1}^r\E\!\left[|\bbrakett{K_{x,i}}{K_{x,j}}|^2+|\bbrakett{K_{y,i}}{K_{y,j}}|^2+|\bbrakett{K_{x,i}}{K_{y,j}}|^2+|\bbrakett{K_{y,i}}{K_{x,j}}|^2\right] \nonumber\\
&\leq \frac{72}{r^2d_1^2}\sum_{i,j=1}^r\E\!\left[|\bbrakett{K_{x,i}}{K_{x,j}}|^2+|\bbrakett{K_{y,i}}{K_{y,j}}|^2\right]\label{eq-1132347}\\
&=\frac{144}{r^2d_1^2}\sum_{i,j=1}^r \E\!\left[|\bbrakett{K_{x,i}}{K_{x,j}}|^2\right],\nonumber\\
&\leq \frac{144}{r^2d_1^2} \sum_{i,j=1}^r \frac{d_\mathrm{a}}{(rd_{\mathrm{b}})^2-1}\left(d_\mathrm{b}+\mathbbm{1}_{i=j}d_\mathrm{a}d_\mathrm{b}^2-\frac{1}{d_\mathrm{b}r}\left(\mathbbm{1}_{i=j}d_\mathrm{b}^2+d_\mathrm{a}d_\mathrm{b}\right)\right)\label{eq-1140038}\\
&\leq \frac{144}{r^2}\cdot \frac{1}{d_\mathrm{a}}\cdot \frac{1}{(rd_\mathrm{b})^2-1}\left(r^2d_\mathrm{b}+rd_\mathrm{a}d_\mathrm{b}^2-d_\mathrm{b}-rd_\mathrm{a}\right) \label{eq-1140111}\\
&= \frac{144}{r^2} \left(\frac{1}{d_\mathrm{a}d_\mathrm{b}}+\frac{1}{r}+\frac{1-d_\mathrm{b}^2+d_\mathrm{a}d_\mathrm{b}/r-d_\mathrm{a}d_\mathrm{b}r}{d_\mathrm{a}d_\mathrm{b}(r^2d_\mathrm{b}^2-1)}\right)\nonumber \\
&\leq \frac{288}{r^3},\label{eq-1140112}
\end{align}
where \cref{eq-1132143} is because \cref{eq-1132146}, \cref{eq-1132347} is because 
\begin{align}
\sum_{i,j=1}^r |\bbrakett{K_{x,i}}{K_{y,j}}|^2&= \|K_x^\dag K_y\|_F^2 =\tr\!\left(K_x^\dag K_y K_y^\dag K_x\right)\leq \|K_xK_x^\dag\|_F \cdot \|K_yK_y^\dag\|_F \nonumber \\
&\leq \frac{1}{2}\left(\|K_x^\dag K_x\|_F^2 + \|K_y^\dag K_y\|_F^2\right)=\frac{1}{2}\left(\sum_{i,j=1}^r |\bbrakett{K_{x,i}}{K_{x,j}}|^2 +\sum_{i,j=1}^r |\bbrakett{K_{y,i}}{K_{y,j}}|^2 \right)\nonumber,
\end{align}
where $K_x$ denotes the matrix with columns $\kett{K_{x,i}}$,
\cref{eq-1140038} is due to exactly the same argument as that in Eq. (79) in \cite{oufkir2026improved} and we set $d_\mathrm{a}=\dim(\mathcal{H}_\mathrm{a})$, $d_{\mathrm{b}}=\dim(\mathcal{H}_{\mathrm{b}1})$, 
\cref{eq-1140111} uses $d_1\geq d_\mathrm{a}$ and \cref{eq-1140112} uses $d_\mathrm{a}d_\mathrm{b}=r(d_2-1)^2/4\geq r$.
\end{proof}
\end{document}